\begin{document}

\pagestyle{plain}

\title{Reachability analysis of first-order definable pushdown systems}

\author[1]{Lorenzo Clemente}
\author[1]{Sławomir Lasota}
\affil[1]{University of Warsaw%
}
\authorrunning{L.\,Clemente and S.\,Lasota} 

\Copyright{Lorenzo Clemente and Sławomir Lasota}

\subjclass{
	F.1.1 [\emph{Computation by Abstract Devices}]: Models of Computation;
	F.2.2 [\emph{Nonnumerical Algorithms and Problems}]: Computations on discrete structures;
	F.3.1 [\emph{Specifying and Verifying and Reasoning about Programs}]: Mechanical verification;
	F.4.1 [\emph{Mathematical Logic}]: Logic and constraint programming.
}

\keywords{automata theory, pushdown systems, sets with atoms, saturation technique.}

\serieslogo{}
\EventShortName{}
\DOI{10.4230/LIPIcs.xxx.yyy.p}

\maketitle

\begin{abstract}
	We study pushdown systems
	where control states, stack alphabet, and transition relation, instead of being finite, are first-order definable in a fixed countably-infinite structure.
	We show that the reachability analysis can be addressed with the well-known saturation technique
	for the wide class of \emph{oligomorphic structures}.
	Moreover, for the more restrictive \emph{homogeneous structures},
	we are able to give concrete complexity upper bounds.
	We show ample applicability of our technique by presenting several concrete examples of homogeneous structures,
	subsuming, with optimal complexity, known results from the literature.
	We show that infinitely many such examples of homogeneous structures can be obtained with the classical \emph{wreath product} construction.
\end{abstract}


 \newcommand{\lorenzo}[1]{\todo[color=blue!20]{#1}}

\newtheorem{proposition}{Proposition}
\newtheorem{claim}{Claim}

\newcommand{\mysubsection}[1]{\subparagraph*{#1.}}
\newcommand{\atoms}{\mathbb A}
\newcommand{\vatoms}{\mathbb V}
\newcommand{\btoms}{\mathbb B}
\newcommand{\nat}{\mathbb N}
\newcommand{\N}{\mathbb N}
\newcommand{\Q}{\mathbb Q}
\newcommand{\R}{\mathbb R}
\newcommand{\D}{{\mathbb D}}
\newcommand{\Z}{{\mathbb Z}}
\newcommand{\A}{{\mathbb A}}
\newcommand{\B}{{\mathbb B}}
\newcommand{\wreath}[2]{#1 \otimes #2}

\newcommand{\Aa}{{\cal A}}
\newcommand{\Bb}{{\cal B}}
\newcommand{\Cc}{{\cal C}}
\newcommand{\Qq}{{\hat Q}}
\newcommand{\Hh}{{\cal H}}
\newcommand{\aaut}{{\cal A}}
\newcommand{\baut}{{\cal B}}
\newcommand{\caut}{{\cal C}}

\newcommand{\trans}[3]{ #1 \stackrel{#2}{\longrightarrow} #3 }
\newcommand{\trrule}[3]{(#1,#3)}
\newcommand{\transtrans}[3]{ #1 \stackrel{#2}{\longrightarrow}^* #3 }
\newcommand{\horiztrans}[3]{ #1 \stackrel{#2}{\leadsto} #3 }
\newcommand{\horiztranstrans}[3]{ #1 \stackrel{#2}{\leadsto}^* #3 }
\newcommand{\horiztransplus}[3]{ #1 \stackrel{#2}{\leadsto}^+ #3 }
\newcommand{\invreach}[2]{\text{Reach}^{-1}_{#1}(#2)}
\newcommand{\invacc}[1]{\text{Reach}^{-1}_{#1}}
\newcommand{\ew}{\varepsilon}
\newcommand{\eps}{\varepsilon}
\newcommand{\slcomm}[1]{\todo[color=blue!20]{SL: #1}}
\newcommand{\exptime}{{\tt ExpTime}\xspace}
\newcommand{\ptime}{{\tt PTime}\xspace}
\newcommand{\PDS}{PDS\xspace}
\newcommand{\pds}{\mathcal P}
\newcommand{\NFA}{NFA\xspace}
\newcommand{\nfa}{\mathcal A}
\newcommand{\nfb}{\mathcal B}
\newcommand{\push}{\mathsf{push}}
\newcommand{\pop}{\mathsf{pop}}
\newcommand{\nop}{\mathsf{nop}}
\newcommand{\forced}{\text{forced}}

\newcommand{\eqdef}{\stackrel {\text{def}} =}
\newcommand{\func}[3]{\mathop{\mathchoice{%
    {#1}\colon\;{#2}\;\longrightarrow\;{#3}}{%
    {#1}\colon{#2}\to{#3}}{%
    script}{%
    sscript}
  }}
\newcommand{\defin}[1]{[#1]}
\newcommand{\ov}[1]{\overline{#1}}
\newcommand{\aut}[1]{\text{Aut}(#1)}
\newcommand{\orbit}[1]{\text{orbit}(#1)}
\newcommand{\fodef}{FO-definable\xspace}
\newcommand{\Fodef}{FO-definable\xspace}
\newcommand{\qfdef}{qf-definable\xspace}
\newcommand{\Qfdef}{Qf-definable\xspace}
\newcommand{\prestar}[1]{\text{Pre}^*(#1)}
\newcommand{\autom}{{\cal A}}
\newcommand{\quot}[2]{#1/{#2}}
\newcommand{\set}[1]{\left\{ #1 \right\}}
\newcommand{\setof}[2]{\set{ #1 \; | \; #2 }}
\newcommand{\tuple}[1]{\langle #1 \rangle}
\newcommand{\Tuple}[1]{\left\langle #1 \right\rangle}
\newcommand{\goesto}[1]{\stackrel {#1} \longrightarrow}
\newcommand{\lang}[2]{\mathcal L_{#1}({#2})}
\newcommand{\poly}{\text{poly}}

\newcommand{\proj}[2]{#2|_{#1}}
\newcommand{\bigrule}[2]{\frac{#1}{#2}}
\renewcommand{\phi}{\varphi}
\newcommand{\vars}[1]{\text{vars}(#1)}

\newcommand{\rhopush}{\text{push}}
\newcommand{\rhopop}{\text{pop}}
\newcommand{\rhonop}{\text{nop}}
\newcommand{\concat}{\cdot}
\newcommand{\elt}[1]{\langle #1\rangle}

\newcommand{\ignore}[1]{}


\section{Introduction}

\subparagraph*{Context.}

Pushdown automata (\PDS) are a well-known model of recursive programs, with applications in areas
as diverse as language processing, data-flow analysis, security, computational biology, and program verification.
Many interesting analyses reduce to checking reachability in the infinite configuration graph generated by a \PDS,
which can be done in PTIME with the popular \emph{saturation algorithm}
\cite{BouajjaniEsparzaMaler:Pushdown:1997,FinkelWillemsWolper:Pushdown:1997}
(cf. also the recent survey \cite{CarayolHague:SaturationSurvey:2014}).
Saturation shows a slightly more general property of \PDS graphs,
which is sometimes called \emph{effective preservation of regularity}:
For a regular set of target configurations of a given \PDS,
the set of all configurations which can reach the target in a finite number of steps is effectively regular too.
The preservation is \emph{effective} in the sense that there exists a procedure which produces,
from an NFA recognizing the target set,
an NFA recognizing the predecessors.
This is a central theoretical result in the analysis of \PDS,
with immediate practical applications as demonstrated by the prominent tool MOPED \cite{Esparza:Schwoon:MOPED:2001}.
Therefore, it is of interest to extend this conceptually simple and yet powerful method to more general settings.

Several generalizations of the pushdown structure yielding \PDS-like models admitting effective preservation of regularity are known,
e.g., tree-pushdown systems \cite{Guessarian:TPDA:1983},
ordered multi-pushdown systems \cite{BreveglieriCherubiniCitriniCrespi-Reghizzi:Ordered:1996,Atig:ordered:2012},
annotated higher-order pushdown systems \cite{Maslov:Multilevel:1976,BroadbentCarayolHagueSerre:Saturation:2012},
and strongly normed multi-pushdown systems~\cite{CHL13lmcs}.
In this paper, instead of generalizing the pushdown structure itself,
we generalize the \emph{contents} of the pushdown,
by allowing the pushdown symbols to be drawn from an infinite set.
%
Our model is parametric in the choice of a countably-infinite logical structure $\atoms$, called \emph{atoms}.
%
%
%
We introduce and study \emph{first-order definable pushdown systems} (\fodef \PDS) over $\atoms$, which are like usual \PDS,
except that control locations, stack alphabet, and transition relation
are \fodef sets over $\atoms$, instead of ordinary finite sets.
Thus, we do not invent a new model,
but we reinterpret the classical model in a new setting.
This covers ordinary \PDS as a special case, and allows the study of non-trivial yet decidable classes of \PDS over infinite alphabets.
%
%
For instance, by taking $\atoms$ to be \emph{equality atoms} $(\D, =)$,
i.e., a countably-infinite set $\D$ where only equality testing is allowed,
we obtain (and slightly generalize) pushdown register automata \cite{ChengKaminski:CFL:AI98,BolligCyriacGastin:FOSSACS:2012,MRT14}.

\subparagraph*{Contributions and organization.}

The technical results of this paper and its structure are as follows.
%
%
%
	In Sec.~\ref{sec:fodef sets}, we recall the setting of \fodef sets, \fodef relations, and \fodef NFA.
	In Sec.~\ref{sec:fodef aut},
	we introduce \fodef \PDS.
	This is done by reinterpreting the classical model in the \fodef framework.
	Our approach has the advantage that we do not need to define a new model.
	Instead, we \emph{reinterpret} the classical model in a generic logical framework.
	In Sec.~\ref{sec:oligomorphic},
	we consider \emph{oligomorphic atoms}%
	\footnote{A structure $\atoms$ is \emph{oligomorphic} if for every $n$, the product $\atoms^n$ is orbit-finite.}
	with a decidable first-order theory,
	and we show effective preservation of regularity
	for the backward reachability relation of configuration graphs of \fodef \PDS.
	This is obtained via a symbolic implementation of the classical saturation method,
	which comes along with a simple proof of correctness.
	In Sec.~\ref{sec:homogeneous},
	we provide an upper complexity bound in the special case of \emph{homogeneous} atoms,
	and in particular an \exptime bound
	in the case of \emph{tractable} homogenous atoms,
	matching the known \exptime-hardness for equality atoms from \cite{MRT14}.
	%
	%
	In Sec.~\ref{sec:examples},
	we provide many interesting examples of tractable homogeneous 
	atoms for which we can apply our results,
	including equality atoms \cite{MRT14} (as remarked above),
	but also:
	\emph{total order atoms} $(\Q, \leq)$, which can be used for modeling densely-ordered data values;
	\emph{equivalence atoms} $(\D, R)$,
	where $R$ is an equivalence relation of infinite index s.t. each equivalence class is infinite,
	which can be used to model nested data values;
	\emph{universal tree atoms},
	which can be used to model dynamic topologies of concurrent programs with process creation and termination;
	as well as other structures,
	such as \emph{universal partial order atoms},
	\emph{universal tournament atoms},
	and \emph{universal graph atoms} \cite{survey}.
	In the same section, we also show that the classic \emph{wreath product} construction
	can be used to generate infinitely many new tractable examples from previous ones.
	%
%
Our logical approach has the advantage to highlight the general principle behind decidability,
and we can thus prove correctness once and for all for \emph{all} structures satisfying the mild assumptions above.
As a byproduct, we also obtain tight complexity results for \PDS over natural classes of infinite alphabets.
Infinitely many such natural structures can be found by using the wreath product construction.
%
%
%
%
In Sec.~\ref{sec:conclusions}, we conclude with some directions for future work.


\section{Preliminaries}
\label{sec:fodef sets}

\mysubsection{Sets with atoms}

Let $\atoms$ be a countably-infinite logical structure with finite vocabulary.
An element of the structure we call \emph{atom}, and the whole structure we call \emph{atoms}. 
Examples of atoms are equality atoms $(\D, =)$, i.e., an arbitrary countable infinite set $\D$ with equality,
and total order atoms $(\Q, \leq)$, i.e., the rationals with the dense order.
More examples of atoms will be discussed in Sec.~\ref{sec:examples}.
%
%
In the study of atoms, the group $\aut{\atoms}$ of automorphisms\footnote{An \emph{automorphism} is a bijection of atoms that preserves all relations from the vocabulary.} of $\atoms$ plays a central role.
For instance, automorphisms of equality atoms are all permutations of $\D$,
and automorphisms of total order atoms are monotonic permutations of $\Q$.
By using atoms, we can build sets containing either previously built sets, or atoms themselves. 
For example, we build tuples $\atoms^n$ of fixed length, or disjoint unions thereof.
On such sets, we will consider the natural action of $\aut{\atoms}$,
which renames atoms while keeping intact the remaining structure.
For instance, on tuples of atoms the natural action is the point-wise renaming:
for $\pi \in \aut{\atoms}$ and $a_1, \ldots, a_n \in \atoms$,
$\pi(a_1, \ldots, a_n) \ = \ (\pi(a_1), \ldots, \pi(a_n))$.
Similarly, on disjoint unions the action is component-wise.
The action induces the notion of \emph{orbit},
which is the set of elements that can be reached via renaming,
i.e., $\orbit{e} = \setof{ \pi(e) } { \pi \in \aut{\atoms} }$.
The sets in the sequel will always be \emph{equivariant}, i.e., invariant under action of automorphisms\footnote{%
More generally, one can consider \emph{finitely supported} sets. A set is supported by $S \subseteq_\text{fin} \atoms$
if it is invariant under automorphisms that preserve elements of $S$. The results of this paper can be straightforwardly generalized
to finitely supported sets.}.
Every orbit is equivariant by definition, and every equivariant set is a disjoint union of orbits.
For instance, in total order atoms $(\Q, \leq)$,
the set $\Q^2$ is the disjoint union of 3 orbits,
$\setof{(q,q')}{q < q'}$, $\setof{(q,q')}{q = q'}$, and $\setof{(q,q')}{q > q'}$;
and $\Q^2 \uplus \Q^3$ is the disjoint union of 16 orbits.
A central notion 
is that of \emph{orbit-finite} sets, which are \emph{finite} unions of orbits (as opposed to arbitrary unions).
Intuitively, an orbit-finite set has only finitely many elements up to renaming by atom automorphisms.
Orbit-finiteness generalizes finiteness,
and a substantial portion of results from automata theory carry over to the more general orbit-finite setting~\cite{BKL11full}.
This paper can be seen as such a case study for the specific case of pushdown automata.
For the sake of concreteness, we restrict in the rest of the paper to \emph{\fodef sets}, to be defined now; 
we only note that the results of this paper can be straightforwardly generalized to all orbit-finite sets with atoms.

\ignore{

...Every orbit is a quotient of an orbit of tuples of atoms, by an equivariant equivalence.
Equivariant equivalence is also a set of tuples, hence FO-definable. This defines representations
of arbitrary orbit-finite sets.

Thus every orbit X has an equivariant projection O --> X, from some orbit O of tuples of atoms.
For reachability analysis, one may consider O instead of X, i.e. consider inverse image of
states, alphabets, transitions, etc along the projections mentioned above. This reduces the 
general case of all orbit-finite sets to FO-definable sets.

}

\mysubsection{\Fodef sets}

Fix a structure $\atoms$ over a finite vocabulary.
We describe infinite sets symbolically using first-order logic over the vocabulary of $\atoms$,
which we assume to always include the equality relation $=$.
A first-order formula $\phi(\vec x)$
(where we explicit list all free variables according to an implicit order)
with $n \geq 1$ free variables
\emph{defines} the subset $\defin{\phi} \subseteq \atoms^n$
of tuples that satisfy $\phi$, i.e.,
$\defin{\phi} = \setof{ \vec a \in \atoms^n}{(\vec x \mapsto \vec a) \vDash \phi}$.
This set is always equivariant,
since a formula can only compare atoms by using symbols from the signature,
and automorphisms by definition respect this signature.
The \emph{dimension} of $\defin{\phi}$ is the number $n \geq 1$ of free variables of $\phi$,
denoted by $\dim \phi$. 
We also allow the tautologically true formula $\phi \equiv (\forall x \cdot x = x)$;
by convention, we take $\dim\phi = 0$ and $\defin\phi$ is a singleton (for a fixed atom in $\atoms$).
%
%
A \emph{\fodef set $X$ over $\atoms$} is a finite indexed union of such sets,
i.e., 
\[ X = \bigcup_{l \in L} \set l \times \defin{\phi_l} , \qquad \text{ where $L$ is a finite index set.} \]
When we want to omit the formal indexing, we just write $X$ as the finite disjoint union $\biguplus_{l \in L} \defin{\phi_l}$.
Since \fodef sets are unions of equivariant sets, they are equivariant too.
When $\dim {\phi_l} = 0$ for every $l \in L$, then $X$ is finite and has the same number of elements as $L$.
Thus, \fodef sets generalize finite sets.

We use \fodef sets for control locations and alphabets of automata.
In the former case, an index $l \in L$ may be understood as a control location,
and a tuple $\vec{a} \in \atoms^n$ as a valuation of $n$ registers.
Under this intuition, $\phi_l$ is an invariant
that constrains register valuations in a control location $l$.
We do not assume that all component sets $\defin{\phi_l}$ have the same dimension,
i.e., the number of registers may vary from one control location to another.

%
%

\mysubsection{\Fodef relations}

Along the same lines, we define \fodef binary relations.
Consider two \fodef sets $X = \biguplus_{l \in L} \defin{\phi_l}$ and $Y = \biguplus_{k \in K} \defin{\psi_k}$.
An \fodef relation $R \subseteq X \times Y$ is an \fodef set
$R = \biguplus_{l \in L, k \in K} \defin{\xi_{l k}}$
where the indexing set is the Cartesian product $L \times K$,
and every component set $\defin{\xi_{l k}}$ satisfies
$\defin{\xi_{l k}} \subseteq \defin{\phi_l} \times \defin{\psi_k}$.
In particular, $\dim {\xi_{l k}} = \dim {\phi_l} + \dim {\psi_k}$.
Relations of greater arities can be obtained by iterating the construction above.
We use \fodef relations to define transition relations of automata.
The formula $\xi_{l k}$ may be understood as a constraint on a transition from control location $l$ to control location $k$,
prescribing how a valuation of registers in $l$ before the transition relates to a valuation of registers in $k$ after the transition.


\mysubsection{\fodef \NFA}

As an example application of \fodef sets and relations, we define \fodef \NFA.
This model will be used later to recognize regular set of configurations of \fodef \PDS, also defined later.
A classical \NFA is a tuple $\nfa = (\Gamma, Q, F, \delta)$,
where $\Gamma$ is a finite input alphabet,
$Q$ is a finite set of states, of which those in $F \subseteq Q$ are the final ones,
and $\delta \subseteq Q \times \Gamma \times Q$ is the transition relation.
Once an initial state is chosen,
the definitions of run, accepting run, and language $\lang {} \nfa$ recognized by $\nfa$ are standard.
By simply replacing ``finite'' with ``\fodef'' in the definition above, we obtain \fodef \NFA.
To fix notation, an \fodef \NFA will be written as a tuple
$\nfa = (
	\Gamma = \biguplus_{k \in K} [\phi_k], \ 
	Q = \biguplus_{l \in L} [\psi_l], \ 
	F = \biguplus_{l \in L} [\psi^F_l], \ 
	\delta = \biguplus_{l, l' \in L, k \in L} [\delta_{lkl'}])$,
where w.l.o.g. we assume that $Q$ and $F$ have the same index set $L$.
Notice that $\delta$ is an \fodef set, while $\delta_{lkl'}$ is a first-order formula.

\begin{example}
	\label{ex:NFA}
	Let $\atoms$ be the total order atoms $(\Q, \leq)$,
	and let the alphabet be $\Gamma = \set{k} \times \Q$.
	Consider the language
	$M = \setof{ (k, a_1) \cdots (k, a_n) \in \Gamma^*}{ a_1 \geq a_2 \leq a_3 \geq \cdots \leq a_{2n+1}}$ 
	of non-empty finite words of odd length of alternating growth.
	This language 
	can be recognized from state $\ell_I$ by the \NFA
	$$\nfa = (
		\Gamma, \ 
		Q = \set{\ell_I} \cup \set{\ell_0} \times \Q \cup \set{\ell_1} \times \Q,
		F = \set{\ell_0} \times \Q,
		\delta = \biguplus_{l, l' \in \set {\ell_I, \ell_0, \ell_1}} [\delta_{lkl'}]
		)
		.$$
	The initial location $\ell_I$ does not contain any register,
	while control locations $\ell_0, \ell_1$ both contain one register,
	which is used to guess the next input symbol and to ensure the right ordering.
	Formally, 
	$\delta_{\ell_I k \ell_0}(, y, x') \equiv x' \leq y$
	(we use the notation $\delta_{\ell_I k \ell_0}(, y, x')$ to emphasize that $\ell_I$ does not have any register),
	$\delta_{\ell_0 k \ell_1}(x, y, x') \equiv (x = y \wedge x' \geq y)$,
	$\delta_{\ell_1 k \ell_0}(x, y, x') \equiv (x = y \wedge x' \leq y)$,
	and $[\delta_{l k l'}] = \emptyset$ for the other cases.
\ignore{

	For the total order atoms, consider an alphabet $A$ and a set of states $Q$, indexed by $\{a, b\}$ and $\{p, q\}$, respectively: 
	\begin{align*}
	A_a \ & =  \  \Q & Q_p \ & = \  \{ (t, t') \in \Q^2 \ : \ t \leq t' \} \\
	A_b \ & =  \  \Q^2 & Q_q \ & = \  \Q
	\end{align*}
	Thus in control state $p$ there are two registers, with the value of the first one smaller or equal to the value of the second one; 
	and in control state $q$ there is just one register.

	To define a ternary \fodef relation $\delta \subseteq Q \times A \times Q$ we use the indexes
	$w \in \{p, q\} \times \{a,b\} \times \{p, q\}$. 
	In order to avoid introducing unnecessary entities, we write $\delta_{w}$ for a basic \fodef set as well as for the formula defining
	this set, hoping that this identification will not lead to confusion. 
	Below, variables $x$ or $\vec{x} = (x_1, x_2)$ range over $Q$, and variables $y$ or $\vec{y} = (y_1, y_2)$ 
	range over $A$. 
	Note that the number of free variables of $\delta_w$ depends on the index $w$.

	\begin{align*}
	\delta_{p a p}(\vec{x}, y, \vec{x'}): & \quad x_1 \leq y \leq x_2 \ \land \ x'_1 = y \ \land \ x'_2 = x_2 \\
	\delta_{p b q}(\vec{x}, \vec{y}, x'): & \quad x_1 = y_1 \ \land \ x_2 = y_2 \ \land \  (x' = x_1 \lor  x' = x_2) \\
	\delta_{q b p}(x, \vec{y}, \vec{x'}): & \quad x = y_1 \ \land \ x'_1 = x'_2 = y \\
	... & \quad \text{ (other cases omitted) }
	\end{align*}

	The relation $\delta$ describes $A$-labeled transitions between states from $Q$.
	For instance, in state $p(x_1, x_2)$ (which means control state $p$ with register values $x_1, x_2$), 
	if $a(y)$ is read, a transition may go to state $p(x'_1, x'_2)$ if the first formula above holds, i.e. the value $y$ read from input
	is between the values of registers, and the value of the first register is updated to $y$. 
	Likewise, if $b(y_1, y_2)$ is read, a transition from control state $p$ to $q$ is possible assuming that the input values
	$y_1, y_2$ are equal to register values; the value of register in control state $q$ is chosen nondeterministically among
	the two input values. Similarly, upon reading $b(y_1, y_2)$ a transition is possible from control state $q$ back to $p$
	if the register value is equal to $y_1$; the values of registers in $p$ are both set to $y_2$. 
}
\end{example}


\section{First-order definable pushdown systems}
\label{sec:fodef aut}

In this section we define \fodef \PDS and their reachability problem.
According to the classical definition, a pushdown system (\PDS) $\pds = \tuple {\Gamma, P, \rho}$
consists of a finite stack alphabet $\Gamma$, 
a finite set of control states $P$,
and a finite set of transition rules $\rho = \rho^\push \cup \rho^\pop$,
which is partitioned into push rules $\rho^\push \subseteq P \times \Gamma \times P \times \Gamma \times \Gamma$
and pop rules $\rho^\pop \subseteq P \times \Gamma \times P$.
%
%
In this paper, we reinterpret this definition in the setting of \fodef sets,
which yields a more general model.
For an atom structure $\atoms$,
\emph{\fodef \PDS over $\atoms$} are obtained by replacing ``finite set'' with ``\fodef set'' in the classical definition.
To fix notation, an \fodef \PDS is a tuple
\[\pds = \tuple {
	\Gamma = 
	\biguplus_{k \in K} [\phi_k], \
	P = \biguplus_{\ell \in L} [\xi_\ell], \ 
	\rho = \rho^\push \cup \rho^\pop}, \] 
where%
\footnote{We could have also considered push rules which do not read the top of the stack,
i.e., of the form
$\rho^\push = \biguplus_{\ell, \ell' \in L, k' \in K} [\rho^\push_{\ell \ell' k'}]$.
However, these would introduce $\epsilon$-transitions during our saturation procedure in Sec.~\ref{sec:oligomorphic},
which we want to avoid for simplicity.}
$\rho^\push = \biguplus_{\ell, \ell' \in L, k, k', k'' \in K} [\rho^\push_{\ell k \ell' k' k''}]$
and $\rho^\pop = \biguplus_{\ell, \ell' \in L, k \in K} [\rho^\pop_{\ell k \ell'}]$.
As in the classical case, an \fodef \PDS induces an infinite transition system $\tuple{\mathcal C, \trans {} {} {}}$,
where the set of configurations is $\mathcal C = P \times \Gamma^*$,
and there is a transition $\trans c {} {c'}$ between two configurations $c = (q, a w)$ and $c' = (q', w')$
if, and only if, either there exists a push rule $(q, a, q', b, c) \in \rho^\push$ s.t. $w' = b c w$,
or there exists a pop rule $(q, a, q') \in \rho^\pop$ s.t. $w = w'$.
Let $\transtrans{}{}{}$ be the reflexive and transitive closure of $\trans {} {} {}$.
For a set $C$ of configurations, the \emph{backward reachability set of $C$}, denoted $\invreach{\pds}{C}$,
is the set of configurations that can reach some configuration in $C$:
\begin{align*}
	\invreach{\pds}{C} = \setof{ c \in \mathcal C}{\transtrans{c}{}{c'} \text{ for some } c' \in C} \ .
\end{align*}

\ignore{
\begin{remark}
	A seemingly more general version of \PDS can be considered by allowing arbitrary prefix rewriting in $\pds$.
	In this case, $\rho$ is an \fodef subset of $P \times \Gamma^* \times P \times \Gamma^*$.
	We argue that, like in the classical case,
	there is no loss of generality in restriction to simple push and pop operations
	when one is only concerned with the reachability set.
	Since $\rho$ is \fodef, there exists $n \geq 0$
	s.t. 
	$\rho \subseteq P \times \Gamma^{\leq n} \times P \times \Gamma^{\leq n}$,
	where $\Gamma^{\leq n} = \set{\varepsilon} \cup \Gamma^1 \cup \Gamma^2 \cup \cdots \cup \Gamma^n$.
	Let $\pds' = \tuple {\Gamma, P', \rho'}$.
	The new set of states is $P' = P \times \Gamma^{\leq n}$,
	which is an \fodef set, since $P$ and $\Gamma$ are \fodef.
	Indeed, if $P = \biguplus_{\ell \in L}[\xi_\ell]$ and $\Gamma = \biguplus_{k \in K} [\phi_k]$,
	then we can represent $P'$ as
	$P' = \biguplus_{\ell \in L, \vec k \in K^{\leq n}}[\psi_{\ell \vec k}]$,
	where, for every $\ell k_1 \dots k_m \in L\times K^{\leq n}$,
	\begin{align*}
		\psi_{\ell k_1 \cdots k_m}(\vec x, \vec y_1, \dots, \vec y_m) \ \equiv \ 
			\xi_\ell(\vec x) \wedge \phi_{k_1}(\vec y_1) \wedge \cdots \wedge \phi_{k_m}(\vec y_m) \ .
	\end{align*}
	It remains to define transitions in $\rho' = \rho^\push \cup \rho^\pop \cup \rho^\nop$.
	For simplicity, we additionally use transitions $\rho^\nop \subseteq P\times P$ that do not change the top of the stack.
	They can easily be replaced by a dummy push followed by a pop.
	A long transition $(q, u, q', v) \in \rho$ is simulated by $((q, u), (q', v')) \in \rho^\nop$.
	In order to unload/load the local buffer,
	we add push/pop transitions $((q, a w), (q, w), a) \in \rho^\push$ 
	and $((q, w), a, (q, a w)) \in \rho^\pop$ for every $q \in P$ and $a w\in \Gamma^{\leq n}$.
	At the level of formulas (recall that $\rho^\nop_{\dots}, \rho^\push_{\dots}, \rho^\pop_{\dots}$ are formulas),
	{\small
	\begin{align*} 
		&\rho^\nop_{(\ell k_1 \cdots k_m) (\ell' k'_1 \cdots k'_{n})}
			(\vec x \vec y_1 \dots \vec y_{k_m}, \ \vec x' \vec y_1' \dots \vec y_{k'_{n}}) \equiv
				\rho_{(\ell k_1 \cdots k_m ) (\ell' k'_1 \cdots k'_{n})}
					(\vec x, \vec y_1, \dots, \vec y_{k_m}, \vec x', \vec y_1', \dots, \vec y_{k'_{n}})
		\\
		&\rho^\push_{(\ell k_0 k_1 \cdots k_m)(\ell k_1 \cdots k_m) k_0}
			(\vec x \vec y_0 \vec y_1 \dots \vec y_m, \ \vec x' \vec y_1' \dots \vec y_m', \ \vec y_0') \equiv
				\vec x' = \vec x \wedge \vec y_0 = \vec y_0' \wedge \cdots \wedge \vec y_m' = \vec y_m
		\\
		&\rho^\pop_{(\ell k_1 \cdots k_m) k_0 (\ell k_0 k_1 \cdots k_m)}
			(\vec x \vec y_1 \dots \vec y_m, \ \vec y_0, \  \vec x' \vec y_0' \vec y_1' \dots \vec y_m') \equiv
				\vec x' = \vec x \wedge \vec y_0 = \vec y_0' \wedge \cdots \wedge \vec y_m' = \vec y_m
	\end{align*}
	}
	This translation preserves reachable configurations, in the following sense.
	Let $C \subseteq  P \times \Gamma^*$ be a set of configurations of $\pds$.
	Clearly, $C \subseteq P' \times \Gamma^*$ is also a set of configurations of $\pds'$,
	and $\invreach{\pds}{C} = \invreach{\pds'}{C} \cap (P \times \Gamma^*)$.
	Therefore, w.l.o.g. in the following we consider \PDS with only push and pop transitions.
\end{remark}
}

\begin{example}
	\label{ex:PDS}
	We define an \fodef \PDS $\pds$ over total order atoms $(\Q, \leq)$ which constructs strictly monotonic stacks,
	the maximal element being on the top of the stack.
	Let $\pds = \tuple {
		\Gamma = \set k \times \Q,
		P = \set {\ell_I},
		\rho = \rho^\push}$,
	where 
	$\rho^\push_{\ell_I k \ell_I k k}(, y, , y', y'') \equiv (y < y' \wedge y'' = y)$.
	%
	%
\end{example}

This paper concentrates on the reachability analysis for \fodef \PDS.
Given an \fodef \PDS $\pds = \tuple{\Gamma, P, \rho}$, two control locations $p, q \in P$, and a stack symbol $\bot \in \Gamma$,
the \emph{reachability problem} asks whether $(p, \bot) \in \invreach{\pds}{\set q \times \Gamma^*}$. 
We start with stack $\bot$ and we ignore the stack at the end of the computation.
More general analyses can be considered by imposing regular constraints on the initial and final stack contents.
These easily reduce to reachability of a regular set of configurations, which is the problem considered in the next section.


\section{Preservation of regularity I: Oligomorphic atoms}

\label{sec:oligomorphic}

We solve the reachability problem as a corollary of a general effective preservation of regularity result
for the backward reachability relation of \fodef \PDS.
To this end, we use \fodef \NFA to describe regular sets of configurations.
In the following, fix an \fodef \PDS $\pds = \tuple {\Gamma, P, \rho}$,
and an \fodef \NFA $\nfa = \tuple {\Gamma, Q, F, \delta}$ s.t. $P \subseteq Q$.
The NFA $\nfa$ recognizes the following language $L_{\pds}(\nfa)$ of configurations of $\pds$,
\begin{align*}
	\lang \pds \nfa = \setof{ (p, w) \in P \times \Gamma^*} {\nfa \text{ accepts } w \text{ from state } p }.
\end{align*}
Such sets of configurations of $\pds$ we call \emph{regular}.
We assume w.l.o.g.~that states of $\nfa$ that belong to $P$ do not have incoming transitions, 
i.e.~$\delta \subseteq Q \times \Gamma\times (Q \setminus P)$.

\begin{example}
	\label{ex:predecessors}
	Recall the \fodef \PDS $\pds$
	from Example~\ref{ex:PDS} building strictly monotonic stacks
	(maximal element on top).
	Let $N$ be the following set of configurations
	$$N = \setof{ (\ell_I, (k, a_1) \cdots (k, a_{2n+1})) \in P \times \Gamma^*} {a_1 \geq a_2 \leq a_3 \geq \cdots \leq a_{2n+1}}.$$
	This set is regular, and it is 
	recognized by the \NFA $\nfa$
	from Example~\ref{ex:NFA},
	i.e., $\lang \pds \nfa = N$.
	The backward reachability set is
	$$\invreach{\pds}{N} = N \cup \setof
			{ (\ell_I, (k, a_2) \cdots (k, a_{2n+1})) \in P \times \Gamma^*}
			{ a_2 \leq a_3 \geq \cdots \leq a_{2n+1}}.$$
%
	%
	We will see below how to compute an \fodef \NFA recognizing $\invreach{\pds}{N}$.
\end{example}

We solve the reachability problem for \PDS over \emph{oligomorphic} atoms.%
\footnote{%
	One could also consider \PDS defined by general prefix rewriting,
	i.e., with transitions in $\rho \subseteq P \times \Gamma^* \times P \times \Gamma^*$.
	For oligomorphic atoms, our simplified push/pop model can simulate prefix rewriting
	while preserving reachability properties (but not configuration graph isomorphism, or even bisimilarity),
	like in the classical case.
}.
Oligomorphicity is an important notion in model theory \cite{survey}.
Formally, a structure is oligomorphic if, and only if, for every $n \in \nat$, the set $\atoms^n$ is orbit-finite.
Not all structures are oligomorphic, as shown in the following example.
%
%
\begin{remark}[Timed atoms]
	Timed atoms $(\Q, \leq, +1)$ is a well-known example of non-oligomorphic structure.
	They extend total order atoms $(\Q, \leq)$ with the successor relation $(+1) \subseteq \Q \times \Q$.
	Automorphisms of timed atoms are monotone bijections $\pi$ of $\Q$ that preserve unit intervals, i.e., $\pi(x + 1) = \pi(x) + 1$.
	To see why timed atoms are non oligomorphic, 
	it suffices to see that already $\Q^2$ has infinitely-many orbits.
	Indeed, for each $z \in \Z$, $\Q^2$ has a disjoint orbit $\setof{(x, y) \in \Q^2}{x - y = z}$.
	(Since automorphisms preserve unit intervals, they preserve all integer distances.)
	Working in non-oligomorphic structures like timed atoms requires the use of specialized techniques,
	and the generic algorithm presented in this section does not terminate.
	We have thoroughly studied the reachability problem for \fodef pushdown systems and automata over timed atoms in \cite{trPDA}.
\end{remark}

Since oligomorphic atoms are very general, we can merely state decidability of the reachability problem, without any complexity bounds. 
The only additional assumption that we require is decidability of the \emph{first-order satisfiability problem} in the structure $\atoms$, which asks, given a first-order formula $\phi(x_1, \ldots, x_n)$,
whether some valuation $\eta:\{x_1, \ldots, x_n\} \to \atoms$ of its free variables satisfies $\phi$.
\begin{theorem} \label{thm:pda decid oligo}
	Let $\atoms$ be an oligomorphic structure with a decidable first-order satisfiability problem.
	For \fodef \PDS $\pds$ over $\atoms$ and an \fodef \NFA $\nfa$ over $\atoms$ recognizing a regular set of configurations $L_{\pds}(\nfa)$,
	one can effectively construct an \fodef \NFA $\nfb$ over $\atoms$ recognizing $L_{\pds}(\nfb) = \invreach{\pds}{L_{\pds}(\nfa)}$.
\end{theorem}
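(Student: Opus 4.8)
The plan is to implement the classical saturation algorithm symbolically, manipulating the defining first-order formulas rather than individual states and transitions. I keep the state set $Q$ and the final states $F$ of $\nfa$ fixed, and build $\nfb$ by starting from $\delta_0 = \delta$ and iteratively enlarging the transition relation $\delta_i$ by two saturation rules, one per kind of pushdown rule. For a pop rule $(p, a, p') \in \rho^\pop$, since $\pds$ can move from $(p, aw)$ to $(p', w)$, acceptance of the latter should entail acceptance of the former; I therefore add the transition $(p, a, p')$ (note this goes into a control state of $P$, which is allowed for the added transitions even though the input $\nfa$ has none such). For a push rule $(p, a, p', b, c) \in \rho^\push$, the move from $(p, aw)$ to $(p', bcw)$ means that whenever the current automaton has a two-step path $\trans{p'}{b}{q_1}$ and $\trans{q_1}{c}{q_2}$, I add $(p, a, q_2)$. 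I iterate until no new transitions appear and output the resulting $\nfb$.

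Next I would verify that each saturation step is effective, i.e. the enlarged relation is again FO-definable with computable defining formulas. Since $\rho^\pop$ and the current $\delta_i$ are FO-definable, the pop rule contributes formulas that are essentially copies of the $\rho^\pop_{\ell k \ell'}$. The push rule contributes, for each relevant tuple of indices, a formula obtained by existentially quantifying over the intermediate state $q_1$ and over the atom components of the symbols $b$ and $c$, thereby composing $\rho^\push$ with two copies of $\delta_i$; this is a routine first-order manipulation of the defining formulas. Hence $\delta_{i+1}$ is FO-definable and computed from $\delta_i$ effectively.

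Termination is the crux, and the step where oligomorphicity is essential. Each $\delta_i$ is an equivariant subset of the fixed FO-definable set $Q \times \Gamma \times Q$, which over oligomorphic atoms is orbit-finite, say a union of $N$ orbits. As $\delta_0 \subseteq \delta_1 \subseteq \cdots$ is a monotone chain of unions of these orbits, it strictly grows at most $N$ times before stabilizing, so the procedure halts after finitely many rounds. To detect the fixpoint (and to carry out each step) I must decide, for FO-definable sets given by formulas, whether one is contained in another; this reduces, block by block, to checking unsatisfiability of a first-order formula of the shape $\varphi_{i+1} \wedge \neg \varphi_i$, which is decidable by the assumed decidability of first-order satisfiability in $\atoms$.

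Finally, correctness: $\lang{\pds}{\nfb} = \invreach{\pds}{\lang{\pds}{\nfa}}$. Soundness ($\subseteq$) holds because every transition added by a saturation rule is witnessed by a genuine backward step of $\pds$, so an accepting run of $\nfb$ on $(p, w)$ unfolds into a computation of $\pds$ from $(p,w)$ to a configuration accepted by $\nfa$. Completeness ($\supseteq$) follows by induction on the number of steps of a derivation $\transtrans{(p,w)}{}{(p',w')}$ with $(p',w') \in \lang{\pds}{\nfa}$, using the fixpoint property of the saturated $\delta$ to guarantee that the required transition is present. Crucially, both directions reason about individual configurations and transitions and never invoke finiteness of $\Gamma$ or $Q$, so the classical argument transfers verbatim to the FO-definable setting. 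The main obstacle is therefore not correctness, which is inherited from the finite case, but guaranteeing and detecting termination, which rests precisely on orbit-finiteness of $\atoms^n$ together with decidable first-order satisfiability.
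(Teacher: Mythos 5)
Your proposal is correct and follows essentially the same route as the paper: the classical saturation algorithm (pop rules added as direct transitions, push rules triggering the two-step composition rule) implemented symbolically on the defining formulas, with termination from orbit-finiteness of $Q \times \Gamma \times Q$ under oligomorphicity, fixpoint detection via decidable first-order satisfiability, and correctness inherited from the finite case using the assumption that states in $P$ have no incoming transitions in $\nfa$. The only cosmetic difference is that the paper folds the pop rules into the initialization $\delta' := \delta \cup \rho^{\mathsf{pop}}$ rather than treating them as an iterated rule, which is equivalent since they do not depend on the current transition relation.
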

We prove Theorem~\ref{thm:pda decid oligo} by using the classical \emph{saturation technique} \cite{BouajjaniEsparzaMaler:Pushdown:1997,FinkelWillemsWolper:Pushdown:1997}.
We first describe a simple abstract algorithm manipulating infinite sets of transitions,
and then we show how this can be implemented symbolically at the level of formulas.
As in the classical case, the \fodef \NFA $\nfb$ which is computed by the algorithm 
is of the form $\tuple {\Gamma, Q, F, \delta'}$ with $\delta \subseteq \delta'$,
i.e., it is obtained by adding certain transitions to $\nfa$.
For any relation $\alpha \subseteq Q \times \Gamma \times Q$,
let $\forced(\alpha) \subseteq Q \times \Gamma \times Q$ be the following set of triples:
\begin{align*}
	\forced(\alpha) = \setof{(q, a, q')} 
		{\exists \trrule{q, a}{}{q'', b, c} \in \rho^\push, \exists (q'', b, q''') \in \alpha, \exists (q''', c, q') \in \alpha}.
\end{align*}
The abstract saturation algorithm is shown in Fig.~\ref{fig:abstract:saturation}.
\begin{figure}
	\begin{align*}
	(0) \ \ & \delta' \ := \ \delta \ \cup \ \rho^\text{pop} \\
	(1) \ \ & \text{\tt repeat} \\
	(2) \ \ & \qquad \delta' \ := \ \delta' \ \cup \ \forced(\delta') \\
	(3) \ \ & \text{\tt until} \text{ forced}(\delta') \subseteq \delta'
	\end{align*}
	\caption{Abstract saturation algorithm.}
	\label{fig:abstract:saturation}
\end{figure}
The algorithm is partially correct for every structure $\atoms$ (even though it might not terminate).
This follows directly from the observation that the saturated \NFA $\nfb$
has a transition $(q, a, q') \in \delta'$ between states $q, q' \in P$ of $\pds$
if, and only if, $\pds$ admits a run $\transtrans{(q, a)}{}{(q', \ew)}$ (we use here the assumption that no transition of $\nfa$ ends in
a state $q\in P$ of $\pds$).
However, on arbitrary structures saturation does not terminate,
either because the inclusion checking on line $(3)$ is not decidable,
or because it never actually holds.
The first issue is addressed by the requirement that $\atoms$ has a decidable first-order satisfiability problem,
and the second one by the fact that $\atoms$ is an oligomorphic structure.

We implement the abstract algorithm from Fig.~\ref{fig:abstract:saturation} symbolically,
by manipulating formulas instead of actual transitions.
We assume w.l.o.g.~that the index set of $P$ (the control locations of $\pds$) is the same as the index set of $Q$ (the states of $\nfa$).
First, notice that the set $\forced(\alpha)$ is \fodef whenever $\alpha$ is so, since it can be expressed as follows:
\begin{align*}
	\forced(\alpha)_{\ell k \ell'}(\vec x, \vec y, \vec x') := \!\!\!\!\!\!\!\!
		\bigvee_{\ell'', \ell''' \in L, k', k'' \in K} \!\!\!\!\!\!\!\!
			&\exists \vec x'', \vec y', \vec y'', \vec x''' \cdot
				\rho^\push_{\ell k \ell'' k' k''}(\vec x, \vec y,  \vec x'', \vec y', \vec y'')\ \wedge\ \\
					&\alpha_{\ell'' k' \ell'''}(\vec x'', \vec y', \vec x''')\ \wedge\ 
					\alpha_{\ell''' k'' \ell'}(\vec x''', \vec y'', \vec x'),
\end{align*}
where $L$ is the index set of $Q$, and $K$ is the index set of $\Gamma$.
Steps (0) (initialization of $\delta'$) and (2) (update of $\delta'$) of the algorithm are implemented by disjunction of \fodef sets,
therefore at each stage of the algorithm $\delta'$ is an \fodef set, and thus an equivariant set (i.e, a union of orbits).
The test (3) is computable whenever first order satisfiability is so.
We obtain the concrete algorithm in Fig.~\ref{fig:concrete:saturation}.
Termination is guaranteed since $\atoms$ is oligomorphic,
which implies orbit-finiteness of $Q \times \Gamma \times Q$.
Indeed, $\delta'$ is always a union of orbits at every stage,
and therefore at least one orbit is added to $\delta'$ at every iteration.
%


\begin{figure}
	\begin{align*}
		\textsf{INPUT:} &\ \textrm{an \fodef \PDS\ }
			\pds = \tuple {\Gamma = \biguplus_{k} \defin{\varphi_k}, 
				P = \biguplus_{\ell} \defin{\xi_\ell},  
				\rho^\push \cup \rho^\pop} \textrm{, with } \\
				&	\ 		\rho^\push = \biguplus_{\ell k \ell' k' k''} \defin{\rho^\push_{\ell k \ell' k' k''}}, 
							\rho^\pop = \biguplus_{\ell k \ell'} \defin{\rho^\pop_{\ell k \ell'}} 
							\textrm{, and an \fodef \NFA\ } \\
			& \nfa = \tuple {\Gamma,
				Q = \biguplus_{\ell} \defin{\psi_\ell},
				\delta = \biguplus_{\ell k \ell'} \defin{\delta_{\ell k \ell'}}}, \textrm{ with } 
				\defin{\xi_\ell} \subseteq \defin{\psi_\ell}, \textrm{ for every } \ell \in L. 		
	\end{align*}
	\begin{align*} 
	(0) \ \ & \textrm{for every } \ell, k, \ell':\  
		\delta'_{\ell k \ell'}(\vec x, \vec y, \vec x') \ := \
			\delta_{\ell k \ell'}(\vec x, \vec y, \vec x') \ \vee \ \rho^\pop_{\ell k \ell'}(\vec x, \vec y, \vec x') \\
	(1) \ \ & \text{\tt repeat} \\
	(2) \ \ & \qquad \textrm{for every } \ell, k, \ell':\    
				\delta'_{\ell k \ell'}(\vec x, \vec y, \vec x') \ := \ 
					\delta'_{\ell k \ell'}(\vec x, \vec y, \vec x') \ \vee \ 
					\forced(\delta')_{\ell k \ell'}(\vec x, \vec y, \vec x') \\ 
	(3) \ \ & \text{\tt until} (\bigwedge_{\ell, k, \ell'} \forall \vec x, \vec y, \vec x' \cdot
		\text{forced}(\delta')_{\ell k \ell'}(\vec x, \vec y, \vec x') \implies
			\delta'_{\ell k \ell'}(\vec x, \vec y, \vec x'))
	\end{align*}
	\caption{Concrete saturation algorithm; $\ell, \ell'$ range over $L$, and $k$ ranges over $K$.}
	\label{fig:concrete:saturation}
\end{figure}

\begin{example}
	We apply the concrete saturation algorithm to the \PDS $\pds$ and \NFA $\nfa$ from Example~\ref{ex:predecessors}.
	Recall that	$\pds = \tuple {\Gamma = \set k \cup \Q, P = \set {\ell_I}, \rho^\push}$,
	with $\rho^\push_{\ell_I k \ell_I k k}(, y, , y', y'') \equiv (y < y' \wedge y'' = y)$,
	and $\nfa = \tuple{ \Gamma, Q = \set {\ell_I} \cup \set{\ell_0, \ell_1} \times \Q, F = \set{\ell_0} \times \Q, \delta}$,
	with
	$\delta_{\ell_I k \ell_0}(, y, x') \equiv x' \leq y$,
	$\delta_{\ell_0 k \ell_1}(x, y, x') \equiv (x = y \wedge x' \geq y)$,
	$\delta_{\ell_1 k \ell_0}(x, y, x') \equiv (x = y \wedge x' \leq y)$
	(omitting the trivial cases).
	For the first iteration, let $\delta^0 := \delta$.
	We compute $\text{forced}(\delta^0)$,
	for which the only nontrivial case is
	$\text{forced}(\delta^0)_{\ell_I k \ell_1}(, y, x') \equiv \exists y', y'', x''' \cdot
		\rho^\push_{\ell_I k \ell_I k k}(, y, , y', y'') \wedge
		\delta^0_{\ell_I k \ell_0}(, y', x''') \wedge
		\delta^0_{\ell_0 k \ell_1}(x''', y'', x')$,
	which equals 
	$$	\exists y', y'', x''' \cdot (y < y' \wedge y'' = y) \wedge
		(x''' \leq y') \wedge
		(x''' = y'' \wedge x' \geq y'').$$
	By removing quantifiers (thanks to the density of $\Q$),
	the former is equivalent to $x' \geq y$.
	Therefore, $\delta^1$ extends $\delta^0$
	with the new transition $\delta^1_{\ell_I k \ell_1}(, y, x') \equiv (x' \geq y)$.
	Since $\delta^1$ is not equivalent to $\delta^0$, we go to the next iteration.
	We compute $\text{forced}(\delta^1)$,
	for which the only new case is
	$\text{forced}(\delta^1)_{\ell_I k \ell_0}(, y, x') \equiv \exists y', y'', x''' \cdot
		\rho^\push_{\ell_I k \ell_I k k}(, y, , y', y'') \wedge
		\delta^1_{\ell_I k \ell_1}(, y', x''') \wedge
		\delta^1_{\ell_1 k \ell_0}(x''', y'', x')$,
	which equals
	$$\exists y', y'', x''' \cdot
		(y < y' \wedge y'' = y) \wedge (x''' \geq y') \wedge (x''' = y'' \wedge x' \leq y'').$$
	The latter is equivalent to $\exists y' \cdot y < y' \wedge y \geq y' \wedge x' \leq y$,
	which is clearly unsatisfiable.
	Therefore $\delta^2$ is equivalent to $\delta^1$,
	and the algorithms stops.
	It is immediate to check that
	$\nfb = \tuple{\Gamma, Q = \ell_I \cup \set{\ell_0, \ell_1} \times \Q, F = \set{\ell_0} \times \Q, \delta^1}$
	recognizes precisely $\invreach{\pds}{N}$, where $N = \lang \pds \nfa$.
\end{example}


\section{Preservation of regularity II: Homogeneous atoms}

\label{sec:homogeneous}

Relational homogeneous structures are a well-behaved subclass of oligomorphic structures,
for which we are able to give precise complexity upper bounds for our saturation construction.
A relational structure $\atoms$ (i.e., with no function symbols in the vocabulary) is \emph{homogeneous} if
every isomorphism between two finite induced substructures%
\footnote{An \emph{induced substructure} is a structure obtained by restricting the universe to a subset of atoms.}%
of $\atoms$ extends to an automorphism of the whole $\atoms$.
%
This immediately implies that $\atoms$ is oligomorphic.
\begin{proposition} \label{claim:nr of orbits}
	Let $\atoms$ be a relational homogeneous structure.
	For $n \geq 1$,
	the number of orbits of $\atoms^n$ is bounded by $2^{\poly(n)}$.
\end{proposition}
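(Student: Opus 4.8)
The plan is to show that, in a homogeneous relational structure, the orbit of an $n$-tuple is completely determined by its \emph{quantifier-free type}, and then to count the possible types. Recall that the quantifier-free type of a tuple $\vec a = (a_1, \ldots, a_n) \in \atoms^n$ records, for every atomic formula $\alpha(x_1, \ldots, x_n)$ over the vocabulary of $\atoms$ (i.e., every equality $x_i = x_j$ and every $R(x_{i_1}, \ldots, x_{i_r})$ for a relation symbol $R$), whether $\atoms \vDash \alpha(\vec a)$. The first step is the key lemma: two tuples $\vec a$ and $\vec b$ lie in the same orbit if and only if they have the same quantifier-free type.

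For the forward implication, if $\pi \in \aut{\atoms}$ maps $\vec a$ to $\vec b$, then $\pi$ preserves equalities and all relations of the signature, so $\vec a$ and $\vec b$ satisfy exactly the same atomic formulas. The converse is where homogeneity enters and is the heart of the argument. Suppose $\vec a$ and $\vec b$ have the same quantifier-free type. Agreement on the equalities $x_i = x_j$ guarantees that the assignment $a_i \mapsto b_i$ is a well-defined bijection $f$ between the finite sets $\set{a_1, \ldots, a_n}$ and $\set{b_1, \ldots, b_n}$; agreement on the relational atomic formulas then makes $f$ an isomorphism between the induced substructures on these two sets. By homogeneity, $f$ extends to an automorphism $\pi$ of $\atoms$, and $\pi(\vec a) = \vec b$ witnesses that the two tuples share an orbit.

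It remains to count quantifier-free types, which is routine. Since the vocabulary is finite, let $m$ bound the number of relation symbols and $r$ bound their maximal arity; both are constants independent of $n$. The number of atomic formulas over variables $x_1, \ldots, x_n$ is at most $n^2 + m \cdot n^r = \poly(n)$, and a quantifier-free type is just a choice of truth value for each such formula, so there are at most $2^{\poly(n)}$ types in total. By the key lemma, the number of orbits of $\atoms^n$ equals the number of \emph{realised} quantifier-free types, which is certainly at most the number of all types, giving the claimed bound. The only subtle point throughout is verifying the well-definedness and the isomorphism property of $f$ in the converse direction; once homogeneity is invoked to extend $f$ to an automorphism, the counting step is immediate.
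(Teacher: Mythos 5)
Your proof is correct and follows essentially the same route as the paper: the quantifier-free type of a tuple is exactly the isomorphism type of the position-labelled induced substructure the paper works with, homogeneity is invoked identically to show type-equality implies orbit-equality, and the counting of types is the same $2^{\poly(n)}$ bound. Your treatment is in fact slightly more careful on the well-definedness of the map when the tuple has repeated entries, which the paper handles only implicitly via the position labelling.
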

\begin{proof}
	A tuple of $n$ elements $(a_1, \dots, a_n) \in \atoms^n$
	can be seen as an induced substructure of $\atoms$,
	where elements are additionally labelled with the positions $\{1 \ldots n\}$.
	Two such induced substructures $\bar a, \bar b \in \atoms^n$ are isomorphic exactly when the elements $\bar a$ and $\bar b$
	satisfy the same relations in the vocabulary of $\atoms$.
	Therefore, there number of isomorphism classes is bounded by $2^{\poly(n)}$.
	Since $\atoms$ is homogeneous, every isomorphism between $\bar a$ and $\bar b$ extends to an automorphism of the whole $\atoms$,
	and thus $\bar a$ and $\bar b$ are in the same orbit.
	Consequently, the same bound applies to the number of orbits of $\atoms^n$.
\end{proof}

\noindent
All structures listed in the introduction are homogeneous relational structures.
However, not all oligomorphic relational structures are homogeneous as the example below shows.

\begin{example}[Bit vector atoms]
	Let a \emph{bit vector} be any infinite sequence of zeros and ones with only finitely many ones.
	A bit vector can be represented	by a finite sequence, by cutting off the infinite zero suffix.
	Consider the relational structure $\vatoms = (V, 0, +)$, consisting of the set $V$ of all bit vectors, together with
	a unary predicate $0(\_)$ that distinguishes the zero vector, and the ternary
	relation $\_+ \_ = \_$ that describes point-wise addition modulo 2.
	Automorphisms of $\vatoms$ are precisely linear mappings,
	i.e., bijections $f$ s.t. $f(0) = 0$ and $f(u + v) = f(u) + f(v)$.
	The orbit of a tuple $(v_1, \ldots, v_n) \in V^n$ is determined by its \emph{addition type},
	i.e., by the the set of all equalities of the form
	$v_{i_1} + \ldots + v_{i_m} = 0$ satisfied by $(v_1, \ldots, v_n)$.
	Indeed, for two tuples $(u_1, \ldots, u_n), (v_1, \ldots, v_n) \in V^n$ having the same addition type,
	consider the partial bijection $f$ defined as $f(u_1) = v_1, \dots, f(u_n) = v_n$.
	By using the Steinitz exchange lemma, the function $f$ can be extended to a linear mapping on the whole $V$,
	and thus $(u_1, \ldots, u_n)$ and $(v_1, \ldots, v_n)$ are in the same orbit.
	Therefore, the number of orbits of $V^n$ is finite.
	On the other hand, $\vatoms$ is not homogeneous.
	For instance, the two induced substructures
	$X = \set{1000, 0100, 0010, 0001}$ and $Y = \set{1000, 0100, 0010, 1110}$ are isomorphic.
	Define, e.g., $f(0001) = 1110$, and $f(x) = x$ if $x\neq 0001$.
	The reason why $f$ is an isomorphism is that $f$ needs to respect $\_+ \_ = \_$ only inside its domain,
	and any combination of two vectors from $X$ falls outside of $X$.
	However, the isomorphism $f$ does not extend to an automorphism of $\vatoms$,
	since vectors in $Y$ are not independent%
	\footnote{The notion of homogeneity can be extended to structures with relations and functions, but one must consider
	\emph{finitely-generated} induced substructures of $\atoms$ instead of finite ones.
	Note that $\vatoms$ becomes homogeneous if $+$ is considered as a binary \emph{function}, instead of a relation.
	The reason is that, in the presence of the functional symbol $+$,
	the homogeneity condition for $\vatoms$ quantifies over finite induced substructures that are closed w.r.t. $+$,
	unlike the substructures in our example.}.
	%
	%
\end{example}
It is worth mentioning that, while some atom structures are not homogenous,
sometimes adding extra relational symbols (thus restricting the notion of isomorphic substructure) can make it homogeneous;
cf. the example of universal tree order atoms from Sec.~\ref{sec:examples},
where adding one extra relational symbol turns a non-homogeneous structure it into a homogeneous one.



Fix a homogeneous relational structure $\atoms$.
We give a precise complexity upper-bound for the complexity of the concrete saturation procedure from Fig.~\ref{fig:concrete:saturation} and, thus, for reachability.
This depends on the complexity of the induced substructure problem for $\atoms$.
The \emph{(finite) induced substructure problem} for $\atoms$
asks whether a given finite structure $A$ over the same vocabulary is an induced substructure of $\atoms$.
This amounts to find an isomorphism mapping elements from $A$ into atoms $\atoms$ s.t. all relations from the vocabulary are preserved.
Assume that the induced substructure problem for $\atoms$ is decidable in time $T(k)$,
where $k$ is the size of the input.
The complexity estimations below are always understood with respect to the sizes of the representing formulas.
Let the \emph{width} of a formula be the number of its variables.
Let $n$ be the width of an input automaton, defined as the greatest width of the formulas appearing in its definition,
and let $m$ be its \emph{size}, defined as the sum of sizes of the defining formulas.
By \emph{$T$-relative pseudo-polynomial} time complexity we mean the time complexity
\begin{align*}
	2^{\poly(n)} \cdot \poly(m) \cdot T(\poly(n)),
\end{align*}
i.e., exponential in the width $n$ but polynomial in the size $m$.
Note that this is \emph{relative} to the complexity $T$ of the induced substructure problem.
%
%
\begin{theorem} \label{thm:pda decid homo}
	Let $\atoms$ be a homogeneous structure with induced substructure problem decidable in time $T(k)$.
	For \fodef \PDS $\pds$ over $\atoms$ and an \fodef \NFA $\nfa$ recognizing a regular set of configurations $L_{\pds}(\nfa)$,
	one can construct in $T$-relative pseudo-polynomial time
	an \fodef \NFA $\nfb$ recognizing $L_{\pds}(\nfb) = \invreach{\pds}{L_{\pds}(\nfa)}$.
\end{theorem}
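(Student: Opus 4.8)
The plan is to run the concrete saturation algorithm of Fig.~\ref{fig:concrete:saturation} and to account for its running time; correctness itself comes for free. Indeed, a homogeneous relational structure is oligomorphic, and its first-order satisfiability problem is decidable whenever the induced substructure problem is: homogeneous relational structures admit quantifier elimination, so every formula $\phi(\vec x)$ is equivalent to a quantifier-free one, and a quantifier-free formula is satisfiable precisely when one of the complete atomic types it subsumes is realized by some finite induced substructure of $\atoms$ --- exactly an instance of the induced substructure problem. Hence Theorem~\ref{thm:pda decid oligo} already guarantees that the algorithm terminates and outputs an \fodef \NFA $\nfb$ with $L_{\pds}(\nfb) = \invreach{\pds}{L_{\pds}(\nfa)}$. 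It remains only to bound the time.

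First I would fix a canonical representation of the relations computed by the algorithm. Each component $\delta'_{\ell k \ell'}$ is an equivariant subset of $\atoms^{d}$ with $d = \dim\xi_\ell + \dim\phi_k + \dim\xi_{\ell'} \le 3n$, hence a union of orbits of $\atoms^d$; I would store it as the set of complete atomic types (one per orbit) that it contains. By Proposition~\ref{claim:nr of orbits}, each component is a union of at most $2^{\poly(n)}$ orbits, and a single orbit is described by a quantifier-free \emph{orbit formula} of size $\poly(n)$ (the conjunction, over the finite vocabulary and over all tuples of the $d$ variables, recording which relations hold). Keeping every $\delta'_{\ell k \ell'}$ in this normal form pins the width at $O(n)$ and the number of stored orbits per component at $2^{\poly(n)}$ throughout the run; without it the existential quantifiers introduced by $\forced$ would accumulate and the formulas would grow unboundedly. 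I would likewise normalize all input formulas of $\rho^\push$, $\rho^\pop$ and $\delta$ to orbit-set form once at the start, within the same budget.

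Next I would bound the number of iterations and the cost of a single iteration. Since $\delta'$ is always a union of orbits of $Q \times \Gamma \times Q$ and strictly grows until the test on line $(3)$ succeeds, the number of iterations is at most the number of orbits of $Q \times \Gamma \times Q$, which is $\poly(m) \cdot 2^{\poly(n)}$ (the index set $L \times K \times L$ has size $\poly(m)$, and each component contributes $2^{\poly(n)}$ orbits by Proposition~\ref{claim:nr of orbits}). For one iteration, computing $\forced(\delta')_{\ell k \ell'}$ amounts to, for each of the $\poly(m)$ choices of indices $\ell'', \ell''', k', k''$, existentially projecting a conjunction over the variables $\vec x, \vec y, \vec x'', \vec y', \vec y'', \vec x''', \vec x'$ --- a formula of width at most $7n = O(n)$. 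I would perform this projection by enumerating the $2^{\poly(n)}$ complete atomic types on these $O(n)$ variables, discarding those not realizable by a finite induced substructure of $\atoms$ (one induced substructure query of input size $\poly(n)$, costing $T(\poly(n))$), keeping those whose restrictions lie in the stored orbit sets of $\rho^\push$ and of $\delta'$, and finally projecting the survivors onto $\vec x, \vec y, \vec x'$. This yields the new orbits of $\forced(\delta')_{\ell k \ell'}$ directly in normal form, and the termination test of line $(3)$ reduces to a plain inclusion check between two orbit sets. One iteration thus costs $\poly(m) \cdot 2^{\poly(n)} \cdot T(\poly(n))$.

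Multiplying the iteration count by the per-iteration cost gives the announced bound $2^{\poly(n)} \cdot \poly(m) \cdot T(\poly(n))$. The main obstacle, and the one point needing real care, is exactly the normal-form discipline of the second step: the abstract saturation of Fig.~\ref{fig:abstract:saturation} is correct over any oligomorphic structure, but a faithful symbolic implementation must repeatedly eliminate the quantifiers introduced by $\forced$ so that neither the width nor the number of stored orbits escapes the $O(n)$ / $2^{\poly(n)}$ budget. Homogeneity is what makes this elimination sound --- a realizable extended type always has a witness because finite induced substructures amalgamate --- and the induced substructure problem is precisely the realizability oracle, which is where the factor $T(\poly(n))$ enters.
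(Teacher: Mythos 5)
Your proposal is correct and takes essentially the same route as the paper: your ``complete atomic types realizable in $\atoms$'' are exactly the paper's legal complete clauses (its ldnf representation), your type-projection step for eliminating the quantifiers introduced by $\forced$ --- justified by homogeneity extending an isomorphism of induced substructures to an automorphism --- is precisely the content of the paper's Proposition~\ref{claim:fo-ldnf}, and your iteration bound via Proposition~\ref{claim:nr of orbits} matches the paper's count. The only differences are presentational (you inline the normal-form maintenance into the per-iteration analysis, whereas the paper factors it into two standalone propositions on ldnf conversion), so nothing further is needed.
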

%
%
As a consequence, reachability in \fodef \PDS over $\atoms$ is decidable in $T$-relative pseudo-polynomial time.
\begin{proof} 

	Fix a homogeneous relational structure $\atoms$,
	and suppose that its induced substructure problem is decidable in time $T(k)$.
	We show that the concrete saturation algorithm from Fig.~\ref{fig:concrete:saturation} terminates in $T$-relative pseudo-polynomial time.
	We use quantifier-free formulas over the vocabulary of $\atoms$ in \emph{legal disjunctive normal form}, to be defined below.
	A \emph{positive literal} is a predicate of the form $r(x_1, \ldots, x_k)$,
	where $x_1, \ldots, x_k$ are variables, and $r$ is a relational symbol in the vocabulary of $\atoms$.
	A \emph{negative literal} is the negation $\neg r(x_1, \ldots, x_k)$ of a positive literal,
	and a \emph{literal} is either a positive or a negative literal.
	We treat equality in the same way as other relations of $\atoms$,
	thus there are also equality and inequality literals.
	A \emph{clause} is a conjunction of pairwise different literals.
	A clause $\phi$ is \emph{complete} if, for every positive literal $l$ over the variables of $\phi$,
	either $l$ or its negation appears in $\phi$, but not both.
	A complete clause $\phi$ is \emph{consistent} if 
	\begin{itemize}
		\item the equality literals define an equivalence over the variables of $\phi$, and 
		\item the literals of $\phi$ are invariant under this equivalence relation,
		i.e., replacing variables appearing 
		in a literal of $\phi$ with equivalent ones yields a literal that also appears in $\phi$.
	\end{itemize} 
	A consistent clause $\phi$ gives rise to a finite structure $\Aa_\phi$ over the same vocabulary as $\atoms$, 
	whose elements are equivalence classes of variables,
	and where a relation $r([x_1], \dots, [x_k])$ holds if, and only if, $r(x_1, \dots, x_k)$ appears in $\phi$
	(the choice of representative variables is irrelevant since $\phi$ is consistent).
	Thus, valuations satisfying $\phi$ are in one-to-one correspondence with \emph{embeddings}
	of $\Aa_\phi$ into $\atoms$, by which we mean injective homomorphisms that both preserve and reflect relations.
	A consistent clause $\phi$ is \emph{legal} if, and only if, the structure $\Aa_\phi$ is isomorphic to
	an induced substructure of $\atoms$, i.e., if there exists an embedding of $\Aa_\phi$ into $\atoms$,
	written $\Aa_\phi \sqsubseteq \atoms$.
	Thus, a clause $\phi$ is legal if, and only if, it is satisfiable.
	\begin{proposition}
		\label{claim:legal}
		Legality of a complete clause of size $m$ is decidable in time $\poly(m) + T(m)$.
	\end{proposition}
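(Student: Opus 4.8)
The plan is to reduce legality to a single call of the induced substructure oracle, preceded by a polynomial-time preprocessing step that decides consistency and, in the consistent case, constructs the finite structure $\Aa_\phi$. Recall from the excerpt that a complete clause is legal if and only if it is satisfiable, and that satisfying valuations are in one-to-one correspondence with embeddings of $\Aa_\phi$ into $\atoms$. Thus it suffices to: (i) decide consistency of $\phi$ in time $\poly(m)$; (ii) in the consistent case, build $\Aa_\phi$ in time $\poly(m)$; and (iii) query whether $\Aa_\phi \sqsubseteq \atoms$ using the oracle in time $T(m)$.

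First I would check consistency. Reading off the positive equality literals of $\phi$, I would compute the connected components of the graph on variables whose edges are the equalities; these components are the candidate equivalence classes. Since $\phi$ is complete, every pair of variables carries exactly one of $x=y$ or $\neg(x=y)$, so consistency of the equality part reduces to verifying that $x=y$ occurs precisely when $x$ and $y$ lie in the same component (this rules out contradictory cycles such as $x=y$, $y=z$, $\neg(x=z)$). This is a standard union-find computation in time $\poly(m)$. For the second consistency requirement, I would check that every non-equality literal is invariant under the equivalence: whenever $x_i \sim y_i$ for all $i$, the literals $r(x_1,\dots,x_k)$ and $r(y_1,\dots,y_k)$ agree in sign. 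As the arity $k$ is bounded by the fixed vocabulary and there are at most $m$ literals, this is again $\poly(m)$. A complete clause that fails either test is unsatisfiable, hence not legal, and the procedure rejects.

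If $\phi$ is consistent, I would construct $\Aa_\phi$ directly from the data already computed: its universe is the set of equivalence classes, and for each relation symbol $r$ of $\atoms$ I set $r([x_1],\dots,[x_k])$ to hold exactly when the positive literal $r(x_1,\dots,x_k)$ occurs in $\phi$; consistency guarantees this is well-defined independently of the chosen representatives. Since each variable contributes to at least one literal, the number of variables, and hence the size of $\Aa_\phi$, is $O(m)$, and the whole construction runs in $\poly(m)$. Finally I would invoke the induced substructure oracle on $\Aa_\phi$, which answers whether $\Aa_\phi \sqsubseteq \atoms$ in time $T(O(m))$, which we write as $T(m)$; by the correspondence recalled above, $\phi$ is legal exactly when the oracle accepts.

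The genuinely hard part is confined to the oracle call in step (iii): deciding whether the finite structure $\Aa_\phi$ embeds as an induced substructure of $\atoms$ is the only non-combinatorial ingredient, and its cost is exactly the $T(m)$ term in the bound. Everything else — computing the equivalence classes, verifying the two consistency conditions, and assembling $\Aa_\phi$ — is elementary bookkeeping carried out in $\poly(m)$ time, giving the claimed $\poly(m) + T(m)$ bound. The one point requiring care is the accounting that $|\Aa_\phi| = O(m)$, so that the oracle is really applied to an instance of size linear in $m$ rather than to something larger.
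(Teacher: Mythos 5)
Your proof is correct and follows exactly the route the paper intends: the paper states Proposition~\ref{claim:legal} without proof because its surrounding definitions already set up the argument you give, namely a $\poly(m)$ consistency check on the equality literals and the invariance condition, a $\poly(m)$ construction of the quotient structure $\Aa_\phi$, and a single call to the induced substructure procedure on an instance of size at most $m$, using that legality coincides with satisfiability. Your added care about $|\Aa_\phi| = O(m)$ and the fixed finite vocabulary bounding the arity is exactly the right bookkeeping.
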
 
	We consider two clauses to be equal when they contain the same literals.
	A formula is in \emph{legal disjunctive normal form (ldnf)}
	if it is a disjunction of pairwise different legal clauses over the same variables.
	We use the convention that the empty clause and the empty ldnf represent, respectively, true and false.
	For two formulas $\phi$ and $\psi$ with the same free variables,
	we say that they are \emph{equivalent}, written $\phi \equiv \psi$, when $[\phi] = [\psi]$, i.e.,
	when they define the same set of tuples.
	%
	%
	\begin{proposition} \label{claim:qf-ldnf}
		A quantifier-free formula $\phi$ can be transformed
		into an equivalent formula $\psi$ in ldnf
		in $T$-relative pseudo-polynomial time.
	\end{proposition}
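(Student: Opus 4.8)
The plan is to compute $\psi$ as the disjunction of all \emph{complete} legal clauses over the free variables of $\phi$ that are consistent with $\phi$, exploiting the fact that over $n$ variables there are only polynomially many atomic predicates and hence at most $2^{\poly(n)}$ complete clauses. First I would fix the variable set to be the $n$ free variables of $\phi$, so that $n$ is the width. Since the vocabulary of $\atoms$ is finite and of bounded arity, and equality is treated as one of the relations, the number of positive literals over these variables is $\poly(n)$: each relation symbol $r$ of arity $k$ contributes at most $n^k$ literals. A complete clause is determined by choosing, for each such positive literal, whether it or its negation occurs; hence there are at most $2^{\poly(n)}$ complete clauses, each of size $\poly(n)$, and I would enumerate all of them.

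Next, for each complete clause $\chi$ I would perform three checks. \emph{Consistency}: verify that the equality literals of $\chi$ define an equivalence on the variables and that all literals are invariant under it; this is a syntactic test running in $\poly(n)$ time. \emph{Legality}: apply Proposition~\ref{claim:legal} to decide whether $\Aa_\chi \sqsubseteq \atoms$; since $\chi$ has size $\poly(n)$, this costs $\poly(n) + T(\poly(n))$. \emph{Satisfaction of $\phi$}: because $\chi$ is complete it fixes the truth value of every atomic predicate over the variables, in particular of every atomic subformula of $\phi$, so I would substitute these truth values into the Boolean structure of $\phi$ and evaluate it in $\poly(m)$ time. The output $\psi$ is the disjunction of exactly those $\chi$ passing all three checks. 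By construction $\psi$ is in ldnf: a disjunction of pairwise distinct legal clauses over the same $n$ variables.

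For correctness, observe that any valuation $\eta$ of the variables into $\atoms$ realizes a unique complete clause $\chi_\eta$ recording the truth of each atomic predicate under $\eta$; this clause is automatically consistent and, being realized by $\eta$, satisfiable and hence legal. Then $\eta \vDash \phi$ iff $\chi_\eta$ satisfies $\phi$ under the Boolean evaluation iff $\chi_\eta$ is one of the disjuncts of $\psi$ iff $\eta \vDash \psi$; thus $\phi \equiv \psi$. For the running time, the work per clause is $\poly(n) + T(\poly(n)) + \poly(m)$ and there are $2^{\poly(n)}$ clauses, giving total time $2^{\poly(n)} \cdot \poly(m) \cdot T(\poly(n))$, which is $T$-relative pseudo-polynomial.

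The main obstacle, and really the crux of obtaining the stated bound rather than a naive worst case, is controlling the size of the normal form: a blind distribution of $\phi$ into DNF could blow up superexponentially in $m$. The key point is that the complete clauses over $n$ variables already form a fixed universe of size only $2^{\poly(n)}$, because the number of literals is $\poly(n)$ rather than $\poly(m)$; this confines the exponential dependence to the width $n$ while keeping the dependence on the size $m$ polynomial. The second delicate point is that legality must be checked \emph{per clause} through the induced-substructure oracle, so one must verify that each enumerated clause has size $\poly(n)$ and therefore incurs only a $T(\poly(n))$ call, keeping the whole procedure within the $T$-relative pseudo-polynomial budget.
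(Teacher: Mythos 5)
Your proposal is correct and follows essentially the same route as the paper's proof: enumerate all complete clauses over the variables of $\phi$, keep those that are legal (via Proposition~\ref{claim:legal}) and consistent with $\phi$, and take their disjunction, with the exponential cost confined to the width $n$ because the number of complete clauses is independent of the size $m$ of $\phi$. You merely spell out details the paper leaves implicit (the consistency check, the Boolean evaluation of $\phi$ under a complete clause, and the valuation-based correctness argument), so there is nothing substantive to add.
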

	\begin{proof}
		Enumerate exhaustively all complete clauses over the variables of $\phi$,
		and keep only those clauses $\{\psi_i\}_i$ which are legal (which is efficiently checkable by Proposition~\ref{claim:legal}),
		and that satisfy $\phi$ (computable in time polynomial in the size of $\phi$).
		Take $\psi = \bigvee_i \psi_i$.
		Clearly, $\psi \equiv \phi$.
		The time complexity claim follows since the number of complete clauses is exponential in the number of variables,
		but independent from the size of $\phi$.
	\end{proof}
	For homogeneous structures, the previous claim can be strengthened to first-order formulas.
	Essentially, this follows from the fact that, in a homogeneous structure,
	existential quantification can always be resolved positively.
	%
	\begin{proposition} \label{claim:fo-ldnf}
		A first-order formula $\phi$ can be transformed to an equivalent formula $\psi $ in ldnf
		in $T$-relative pseudo-polynomial time.
	\end{proposition}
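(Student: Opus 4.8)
The plan is to proceed by structural induction on $\phi$, maintaining throughout a representation of the currently-built formula as an explicit \emph{set of legal complete clauses over its free variables}. The observation that makes every step uniform is that, for a fixed tuple $\vec x = (x_1, \ldots, x_n)$, every valuation $\vec a \in \atoms^n$ satisfies exactly one complete clause over $\vec x$ (the one recording which relations hold among the $a_i$), and that clause is automatically legal, being embeddable via $\vec a$ itself. Hence the legal complete clauses over $\vec x$ index precisely the orbits of $\atoms^n$ and partition $\atoms^n$, so an ldnf $\bigvee_{C \in S} C$ simply names the subset $S$ of orbits it contains. For the base case of a literal, Proposition~\ref{claim:qf-ldnf} already yields an ldnf. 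For the Boolean connectives I first expand both subformulas to ldnf over the common free variables (each complete clause over a smaller variable set is replaced by its legal completions over the larger one, which is again an enumeration plus legality tests), and then read off the answer: conjunction is the intersection of the two clause sets, disjunction is their union, and -- using completeness essentially here -- negation is the complement of $S$ inside the set of \emph{all} legal complete clauses over $\vec x$. Each operation keeps the result an ldnf.

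The only step that genuinely uses homogeneity, and the heart of the proof, is existential quantification. Suppose the subformula has been put in ldnf $\bigvee_i C_i$ over variables $\vec x, y$; then $\exists y \cdot \phi \equiv \bigvee_i (\exists y \cdot C_i)$, so it suffices to treat a single legal complete clause $C$. Let $\proj{\vec x}{C}$ be obtained from $C$ by deleting every literal mentioning $y$; it is again a complete clause over $\vec x$, and it is legal since any embedding of $\Aa_C$ restricts to an embedding of $\Aa_{\proj{\vec x}{C}}$. I claim $\exists y \cdot C \equiv \proj{\vec x}{C}$. The left-to-right inclusion is immediate, as dropping literals only weakens the constraint. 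For the converse, take any $\vec a$ satisfying $\proj{\vec x}{C}$; since $C$ is legal there is \emph{some} valuation $(\vec a', b')$ satisfying $C$, whence $\vec a'$ also satisfies $\proj{\vec x}{C}$. Thus $\vec a$ and $\vec a'$ are two embeddings of the same finite structure $\Aa_{\proj{\vec x}{C}}$ into $\atoms$, so by homogeneity the isomorphism $\vec a' \mapsto \vec a$ extends to an automorphism $\pi$ of $\atoms$; then $b := \pi(b')$ is a witness, as $(\vec a, b) = \pi(\vec a', b')$ satisfies $C$. This is exactly the sense in which ``existential quantification is resolved positively'': whether a witness exists depends only on the orbit of $\vec a$, i.e.\ on $\proj{\vec x}{C}$. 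The universal case is then handled by $\forall y \cdot \phi \equiv \neg \exists y \cdot \neg \phi$.

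For the complexity, by (the proof of) Proposition~\ref{claim:nr of orbits} the number of complete clauses over $n$ variables is $2^{\poly(n)}$, so every clause set manipulated above has at most this size, while testing legality of a single clause costs $\poly(m) + T(m)$ by Proposition~\ref{claim:legal}. Each connective or quantifier thus costs $2^{\poly(n)} \cdot \poly(m) \cdot T(\poly(n))$, and since the induction has $\poly(m)$ steps the overall running time stays $T$-relative pseudo-polynomial. I expect the remaining points to be pure bookkeeping: the variable width $n$ never exceeds the width of $\phi$ (quantification only removes variables, and re-expansion to a common variable set stays within the width of the two subformulas), so the $2^{\poly(n)}$ bound is preserved. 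The single conceptual obstacle is the projection equivalence $\exists y \cdot C \equiv \proj{\vec x}{C}$, for which homogeneity is precisely the right hypothesis.
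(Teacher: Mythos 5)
Your proposal is correct, and its heart coincides exactly with the paper's: the projection equivalence $\exists y \cdot C \equiv \proj{\vec x}{C}$ for a legal complete clause $C$, proved by taking any valuation $\vec a$ of the projection, any embedding $(\vec a', b')$ witnessing legality of $C$, and gluing them with an automorphism supplied by homogeneity, is word-for-word the key step in the paper's proof, as is the treatment of $\forall$ via $\neg\exists\neg$. Where you differ is the surrounding organization. The paper first converts $\phi$ to prenex normal form, puts the quantifier-free kernel in ldnf via Proposition~\ref{claim:qf-ldnf}, and then eliminates quantifiers innermost-first, re-invoking Proposition~\ref{claim:qf-ldnf} on each negation arising in the universal case; it never needs to combine two ldnfs with a Boolean connective. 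You instead run a structural induction, which forces you to handle $\wedge$, $\vee$, $\neg$ on already-processed (possibly formerly quantified) subformulas; the observation that makes this work, and which the paper does not state explicitly at this point, is that legal complete clauses over $\vec x$ are precisely the orbits of $\atoms^{|\vec x|}$ and partition it, so an ldnf names a set of orbits, negation is complementation, and conjunction/disjunction are intersection/union after re-expanding both clause sets over the common variables. This buys you a proof with no prenex conversion and very transparent Boolean cases, at the price of two bookkeeping obligations the paper's route avoids: that a legal clause over fewer variables equals the disjunction of its legal completions over the larger variable set, and that the working width never exceeds the width of $\phi$ (both of which you address). The complexity analysis is the same in both routes: $2^{\poly(n)}$ clauses per variable set, legality tests of cost $\poly(n) + T(\poly(n))$ by Proposition~\ref{claim:legal}, clause counts bounded via Proposition~\ref{claim:nr of orbits}, and polynomially many steps, giving $T$-relative pseudo-polynomial time overall.
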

	\begin{proof}
		As the first step, transform the input formula into prenex normal form. 
		Then, transform the quantifier-free subformula into an equivalent ldnf, using Proposition~\ref{claim:qf-ldnf}.
		Finally, eliminate the quantifiers in sequence, starting from the innermost one, 
		keeping the quantifier-free subformula in ldnf.
		Elimination of one existential quantifier is done as follows.
		First, distribute it over the disjunction of clauses,
		\[
		\phi \quad \equiv \quad \exists x \cdot \psi_1 \lor \ldots \lor \psi_n \quad \equiv \quad
		\exists x \cdot \psi_1 \ \lor \ \ldots \ \lor \ \exists x \cdot \psi_n
		\]
		and then replace every disjunct $\exists x \cdot \psi_i$ with the clause $\psi'_i$ obtained from $\psi_i$ by removing
		those literals that contain $x$.
		We claim that, after elimination of duplicates,
		\[
		\phi \quad \equiv \quad \psi'_1 \ \lor \ \ldots \ \lor \ \psi'_{n'} \ ,
		\]
		where the right-hand side is in ldnf.
		To this end, we show that each $\psi'_i$ is legal,
		and that $\exists x \cdot \psi_i \equiv \psi'_i$.
		Let $\Aa_{\psi_i}$ and $\Aa_{\psi'_i}$ be the two substructures of $\atoms$ defined by the two clauses.
		Clearly, $\Aa_{\psi'_i} \sqsubseteq \Aa_{\psi_i} \sqsubseteq \atoms$,
		which immediately implies legality of $\psi'_i$ by transitivity.
		The left-to-right inclusion $\defin{\exists x \cdot \psi_i} \subseteq \defin{\psi'_i}$
		of the equivalence between $\exists x \cdot \psi_i$ and $\psi_i'$ is immediate,
		since $\exists x \cdot \psi_i$ is more discriminating.
		For the other inclusion $\defin{\psi'_i} \subseteq \defin{\exists x \cdot \psi_i}$,
		let $\bar a' \in \defin{\psi'_i}$.
		Let $f_{\bar a'}$ be the natural embedding of $\Aa_{\psi'_i}$ into $\atoms$
		mapping each equivalence class of variables in $\Aa_{\psi'_i}$ to the corresponding element in $\bar a'$.
		Similarly, since $\Aa_{\psi_i} \sqsubseteq \atoms$,
		there exists a tuple $\bar a b$ and an embedding $g_{\bar a b}$ of $\Aa_{\psi_i}$ into $\atoms$,
		where $g_{\bar a b}([x]) = b$.
		The substructure induced by $\bar a$ is isomorphic to that induced by $\bar a'$.
		Let $h$ be such an isomorphism.
		Since $\atoms$ is homogeneous, $h$ extends to a full automorphism of $\atoms$.
		Define $b' = h(b)$.
		Then, $\bar a' b' \in \defin{\psi_i}$,
		and thus $\bar a' \in \defin{\exists x \cdot \psi_i}$.
		
		%
		
%
%
		
		The universal quantifier is handled with the equivalence
		$\forall x \cdot \phi \equiv \neg \exists x \cdot \neg \phi$:
		First we replace $\neg \phi$ by an equivalent formula in ldnf $\psi$ by applying Proposition~\ref{claim:qf-ldnf}.
		Then, we apply the procedure above to remove the existential quantifier in $\exists x \cdot \psi$,
		and we thus obtain another formula $\psi'$ in ldnf s.t. $\exists x \cdot \neg \phi \equiv \psi'$.
		Finally, a further application of Proposition~\ref{claim:qf-ldnf} to $\neg \psi'$ yields
		a formula $\psi''$ in ldnf s.t. $\psi'' \equiv \neg \exists x \cdot \neg \phi$.
	\end{proof}

	By repeatedly using Proposition~\ref{claim:fo-ldnf},
	we can implement the saturation algorithm in $T$-relative pseudo-polynomial time:
	First, transform all the formulas defining states and transitions of the input automata $\pds$ and $\nfa$ into ldnf.
	Then, in every iteration, the formula $\forced(\delta')$ is also transformed into ldnf.
	Step (2) is implemented by computing the union of clauses,
	and the implication in step (3) reduces to the inclusion of the sets of clauses of $\forced(\delta')$ into those of $\delta'$.
	Thus, one iteration of the algorithm requires relative pseudo-polynomial time.
	The total number of iterations is bounded by the number of orbits of the set $Q \times \Gamma \times Q$,
	since in every iteration at least one orbit is added to $\delta'$.
	By Proposition~\ref{claim:nr of orbits},
	the number of orbits in bounded by $2^{\poly(n)}$ 
	where $n$ is the dimension of $Q \times \Gamma \times Q$.
	Therefore, the concrete saturation algorithm runs in $T$-relative pseudo-polynomial time for homogeneous atoms.
	\ignore{
	\mysubsection{Proof of Proposition~\ref{prop:lower bound}}
	We sketch the proof of \exptime-hardness of the PDA reachability problem for equality atoms.
	For any other atoms, the reachability problem is at least as hard as for equality atoms.

	The proof is by reduction from the following
	decision problem: given $n$ (classical) NFA $\baut_1, \ldots, \baut_n$ and a (classical) PDA $\aaut$,
	decide whether the intersection of the languages of all the input automata is nonempty.
	From the automata $\baut_1, \ldots, \baut_n$ and $\aaut$ we build an \fodef PDA $\bar \aaut$
	over the same input alphabet, with the same stack alphabet as $\aaut$.
	The control states (locations) of $\bar \aaut$ are states of $\aaut$.
	The state space of $\bar \aaut$ has dimension $m+1$ (intuitively, $m+1$ registers), 
	where $m$ is the sum of sizes of state spaces of $\baut_1, \ldots, \baut_n$. 
	One register is used as a reference, and by equality of other register with the reference one we encode states of
	the automata $\baut_1, \ldots, \baut_n$. By a first-order formula of polynomial size we describe the 
	transition rules of $\bar \aaut$, which simulate the synchronous transitions of the input automata
	$\baut_1, \ldots, \baut_n$ and $\aaut$.
	The accepting states may be easily defined for $\bar \aaut$ so that
	the language of $\bar \aaut$ is equal to the intersection of the languages of $\baut_1, \ldots, \baut_n$ and $\aaut$,
	and therefore our construction preserves nonemptiness.
	The nonemptiness of $\bar \aaut$, in turn, may be easily expressed as a special case of the reachability problem.

	}
\end{proof}

As a consequence of Theorem~\ref{thm:pda decid homo},
under a bound on the width of input automata,
the \PDS reachability problem is in \ptime,
independently of the complexity $T(k)$ of the induced substructure problem.
Moreover, the proof of Theorem~\ref{thm:pda decid homo} reveals that
the polynomial above does not depend on the bound on width\footnote{We are
grateful to Mikołaj Bojańczyk for noticing this fact.}.
\begin{corollary}
	The \PDS reachability problem is fixed-parameter \ptime,
	with the width of the input automaton as the parameter.
\end{corollary}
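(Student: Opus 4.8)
The plan is to read the claim directly off the running-time analysis of Theorem~\ref{thm:pda decid homo}, once one is careful about how the input size $m$ enters the bound. Recall that reachability was shown to be solvable in $T$-relative pseudo-polynomial time, i.e., in time $2^{\poly(n)} \cdot \poly(m) \cdot T(\poly(n))$, where $n$ is the width and $m$ the size of the input automata. Since the structure $\atoms$ is fixed, the induced-substructure complexity $T$ is a fixed function, so the quantity $f(n) := 2^{\poly(n)} \cdot T(\poly(n))$ is a computable function of the parameter $n$ alone; it is finite for each $n$ because the induced substructure problem is decidable, no matter how fast $T$ grows. Rewriting the bound as $f(n) \cdot \poly(m)$ already exhibits the shape required by the definition of fixed-parameter tractability, and for fixed $n$ this is polynomial in $m$ regardless of $T$, which is exactly the promised independence from $T(k)$.

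The only point that genuinely needs checking---and the content of the observation attributed to Miko\l aj Boja\'nczyk---is that the degree of the polynomial factor $\poly(m)$ is an absolute constant, not a quantity growing with $n$. To see this I would revisit the proof of Theorem~\ref{thm:pda decid homo} and track every appearance of $m$, of which there are three, each contributing a factor of fixed degree. First, in the initial conversion of the input formulas to ldnf via Proposition~\ref{claim:fo-ldnf}, which after a prenex normal form converts the quantifier-free core to ldnf (Proposition~\ref{claim:qf-ldnf}) and then eliminates quantifiers symbolically; the $m$-dependence sits entirely in the core conversion, where one enumerates the $2^{\poly(n)}$ complete clauses over the variables and for each tests legality (cost $T(\poly(n))$, independent of $m$) and whether it satisfies the quantifier-free core (linear in the core's size, of degree independent of $n$). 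Summed over all input formulas this is $\poly(m)$ of fixed degree, and afterwards every formula is in ldnf of size $2^{\poly(n)}$ and so no longer depends on $m$. Second, the number of index triples $|L|^2\,|K|$ indexing $Q \times \Gamma \times Q$ is at most $\poly(m)$ of fixed degree; combined with the per-component orbit bound $2^{\poly(n)}$ of Proposition~\ref{claim:nr of orbits}, this bounds both the number of formula components manipulated per iteration and the total number of iterations (each adding at least one orbit) by $\poly(m)\cdot 2^{\poly(n)}$. Third, each iteration recomputes $\forced(\delta')$, reconverts to ldnf, takes unions, and performs the inclusion test of step~(3); since after the first conversion every formula is in ldnf of size $2^{\poly(n)}$, these manipulations cost $\poly(m)\cdot 2^{\poly(n)}\cdot T(\poly(n))$ with the $m$-dependence again of fixed degree coming only from the index counting.

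Multiplying the number of iterations by the per-iteration cost and adding the initial conversion then yields a total bound of the form $2^{\poly(n)} \cdot m^{c} \cdot T(\poly(n))$ for an absolute constant $c$, which I would absorb as $f(n) \cdot m^{c}$. This is precisely the definition of fixed-parameter \ptime with the width $n$ as the parameter, completing the proof. I expect the main obstacle to be purely bookkeeping: confirming that none of the formula manipulations---in particular the recurring ldnf conversions---reintroduce an $m$-dependent blow-up whose degree scales with $n$. Everything else follows immediately from Theorem~\ref{thm:pda decid homo}, since the $T$-relative pseudo-polynomial bound was already stated in the factored form $f(n)\cdot\poly(m)$.
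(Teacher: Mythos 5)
Your proposal is correct and takes essentially the same route as the paper: the corollary is read off the $2^{\poly(n)} \cdot \poly(m) \cdot T(\poly(n))$ bound of Theorem~\ref{thm:pda decid homo}, combined with the observation (credited there to Boja\'nczyk) that the degree of the $\poly(m)$ factor in that proof is an absolute constant independent of the width. Your explicit tracking of the three sources of $m$-dependence (initial ldnf conversion, index-triple counting, per-iteration cost) is precisely the verification the paper leaves implicit in the remark preceding the corollary.
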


In Theorem~\ref{thm:pda decid homo} we have shown that
the complexity of the saturation procedure/reachability can be upper-bounded
once we have a bound on the complexity of the induced substructure problem.
We show below that, depending on the homogeneous structure,
the latter problem (and thus reachability)
can be of arbitrarily high complexity, or even undecidable.
Therefore, the bound on the time complexity of induced substructure problem in Theorem~\ref{thm:pda decid homo} is a necessary assumption.
\begin{theorem}
	\label{thm:generic hardness}
	Let $X \subseteq \nat$ be a set of natural numbers.
	There exists a homogeneous structure $\atoms_X$
	s.t. membership in $X$ is many-one reducible to the induced substructure problem for $\atoms_X$.
\end{theorem}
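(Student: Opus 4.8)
The plan is to encode $X$ into the \emph{age} of $\atoms_X$ via the Fraïssé correspondence. Recall that a countably infinite homogeneous relational structure is determined, up to isomorphism, by its age --- the class of finite structures that embed into it --- which is precisely a Fraïssé class (closed under isomorphism and induced substructures, with the joint embedding and amalgamation properties). Crucially, the induced substructure problem for $\atoms_X$ is, by definition, membership of the input structure in this age. Hence it suffices to construct, uniformly and computably in $X$, a Fraïssé class $\mathcal K_X$ and a computable family of finite $\sigma$-structures $(A_n)_{n \in \nat}$ over a fixed finite relational vocabulary $\sigma$ such that $A_n \in \mathcal K_X$ if and only if $n \in X$. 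Letting $\atoms_X$ be the Fraïssé limit of $\mathcal K_X$ and $f(n) = A_n$, one gets $n \in X \iff A_n \in \mathcal K_X \iff A_n \sqsubseteq \atoms_X$, which is exactly a many-one reduction (with computable $f$) of membership in $X$ to the induced substructure problem.

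\textbf{The gadgets.} I would take $\sigma = \{R, G, E\}$ with $R, G$ binary (symmetric, irreflexive) and $E$ unary, and let $A_n$ be the structure on $n+2$ vertices $v_0, \dots, v_{n+1}$ in which $G$ holds exactly on the consecutive pairs $\{v_i, v_{i+1}\}$ (a Hamiltonian ``$G$-path''), $R$ holds on all remaining pairs, and $E$ marks exactly the two path-endpoints $v_0, v_{n+1}$. Two properties must be checked. First, each $A_n$ is \emph{free-amalgamation-indecomposable}: since every pair of distinct vertices carries an $R$- or $G$-edge, $A_n$ cannot be written as a nontrivial free amalgam $B \oplus_A C$ (both sides nonempty outside $A$), as that would require a pair with no edge across the two sides. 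Second, $(A_n)_n$ is an \emph{antichain} for $\sqsubseteq$: an embedding $A_n \sqsubseteq A_m$ must send the two $E$-vertices of $A_n$ onto the only two $E$-vertices of $A_m$, and the $G$-edges of the image must form a single Hamiltonian path with these as its endpoints; this forces the image to be an interval of $v_0, \dots, v_{m+1}$ containing both endpoints, hence all of it, so $n = m$.

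\textbf{The class.} Let $\mathcal K_X$ be the closure of $\{A_n : n \in X\}$ together with the one-element structures under isomorphism, induced substructures, and free amalgamation over common induced substructures. For a relational vocabulary, free amalgamation witnesses both joint embedding (amalgamation over the empty structure) and amalgamation, so $\mathcal K_X$ is a Fraïssé class; it is nonempty and contains arbitrarily large (edgeless) structures, so its limit $\atoms_X$ is countably infinite, homogeneous, and relational over the finite $\sigma$, as required. The forward direction $n \in X \Rightarrow A_n \in \mathcal K_X$ is immediate. For the converse I would prove the standard structural lemma that every free-amalgamation-indecomposable member of such a closure embeds into one of the generators: this follows by induction on the construction, using that an indecomposable induced substructure of a free amalgam $B \oplus_A C$ must lie entirely within $B$ or within $C$ (otherwise it would inherit a nontrivial free decomposition). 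Applying this to the indecomposable $A_n \in \mathcal K_X$ yields $A_n \sqsubseteq A_m$ for some $m \in X$, and the antichain property forces $n = m \in X$.

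\textbf{Main obstacle.} The delicate part is the gadget family, not the Fraïssé machinery. Over plain (single-relation) graphs the only free-amalgamation-indecomposable structures are the complete graphs, which form a chain under $\sqsubseteq$ and thus cannot separate different values of $n$; one is therefore forced to enrich the vocabulary and design gadgets that are simultaneously indecomposable (so that membership in the age cannot be manufactured by amalgamating smaller pieces) and mutually non-embeddable (so that the reduction decodes $n$ exactly). The two-colour-plus-endpoint encoding above is tailored to meet both constraints at once, and the subtle point is verifying the antichain property, i.e.\ that no smaller gadget can sit as an induced substructure of a larger one --- which is precisely what the endpoint marker $E$ rules out.
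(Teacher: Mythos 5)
Your proposal is correct, but it takes a genuinely different route from the paper's. The paper keeps the vocabulary minimal (a single binary relation, i.e.\ directed graphs) and outsources both key steps to the literature: it takes Henson's countably infinite $\sqsubseteq$-antichain of finite tournaments as the gadget family, selects the sub-antichain indexed by $X$, and invokes a cited realization result for homogeneous directed graphs to obtain $\atoms_X$ whose embeddable tournaments decode membership in $X$; the reduction then maps $n$ to the $n$-th tournament of the antichain. You instead enrich the vocabulary to $\{R,G,E\}$ precisely so that an \emph{elementary} antichain exists (your marked two-coloured paths, whose pairwise non-embeddability is a short exercise, versus a nontrivial combinatorial construction for tournaments), and you prove the realization step by hand: $\atoms_X$ is the Fra\"iss\'e limit of the free-amalgamation closure of the selected gadgets, and the converse direction of the reduction follows from your indecomposability lemma (an indecomposable induced substructure of a free amalgam lies in one factor; heredity of the class makes the induction on the closure go through). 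This is essentially an unfolding of the black box the paper cites --- Henson's construction is itself a free-amalgamation class omitting a set of indecomposable structures, tournaments playing there exactly the role your all-pairs-related gadgets play --- so the underlying mechanism is the same, but your write-up is self-contained modulo Fra\"iss\'e's theorem. What the paper's route buys is brevity and a structure over the leanest possible signature; what yours buys is verifiability and robustness: indeed, the proposition as quoted in the paper (``$T \sqsubseteq \atoms$ iff $T \in \mathcal{T}$'' for an upward-closed family $\mathcal{T}$) cannot hold literally, since the tournaments embeddable in a fixed structure form a \emph{downward}-closed family, so the paper's argument tacitly relies on the corrected (complementary) form of Henson's theorem, a subtlety your construction sidesteps entirely. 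Your closing remark about why plain graphs cannot work (their only indecomposables are the cliques, which form a chain) is also the right explanation of why the paper needs tournaments and you need the two colours plus the endpoint marker.
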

\begin{proof}
	\newcommand{\TT}{{\cal T}}
	
	%
	Let $X \subseteq \nat$ be an arbitrary set of natural numbers.
	Intuitively, we effectively encode the set of natural numbers in an infinite antichain of finite tournaments,
	and we construct a homogeneous structure $\atoms_X$
	s.t., for every natural number $n \in \nat$,
	$n \in X$ if, and only if, the encoding of $n$ is an induced substructure of $\atoms_X$.	
	We use the instantiation of the embedding partial order $\sqsubseteq$ to finite directed graphs:
	$G \sqsubseteq H$ if $G$ is isomorphic to an induced subgraph of $H$.
	%
	A \emph{tournament} is a directed graph $T = (V, E)$ s.t., for every pair of vertices $x, y \in V$,
	either $(x, y) \in E$, or $(y, x) \in E$, but not both.
	It is known that there exists a countably infinite $\sqsubseteq$-antichain $\TT$ of finite tournaments \cite{Henson72}.
	Let $f$ be an efficiently computable bijective mapping between natural numbers and tournaments in the
	antichain $\TT$.
	Let $\TT_X$ be those finite tournaments $T$ in $\TT$ with $T = f(n)$ for some $n \in X$.
	The construction of $\atoms_X$ uses the following result.
	\begin{proposition}[\cite{survey}; see also~\cite{Henson72}]
		\
		For every $\sqsubseteq$-upward-closed family $\TT$ of finite tournaments,
		there is a homogeneous directed graph $\atoms$ such that,
		for every finite tournament $T$,
		%
			$T \sqsubseteq \atoms \textrm{ if, and only if, } T \in \TT.$
	\end{proposition}
	Let $\atoms_X$ be the homogeneous directed graph obtained by applying the proposition above
	to the upward closure of the antichain $\TT_X$.
	Then, for a natural number $n \in \nat$,
	we have $n \in X$ if, and only if, the finite tournament $f(n)$ is in $\TT_X$,
	which is the same as $f(n)$ being in the upward-closure of $\TT_X$,
	since $f(n)$ is by construction in the antichain $\TT$.
	By the proposition above, the latter property is equivalent to ask whether $f(n) \sqsubseteq\atoms_X$.
	Therefore, we can reduce membership in $X$ to the induced substructure problem in $\atoms_X$.
\end{proof}


\section{Examples of homogeneous structures}

\label{sec:examples}

The purpose of this section is to provide concrete examples of homogeneous structures
for which we can efficiently solve the reachability problem of \fodef \PDS.
Those are well known in the model-theoretic community (cf. \cite{survey}),
and we present them here in order to show the wide applicability of our results.
We also present a general technique, called \emph{wreath product}, 
which can be used to derive new homogeneous structures from known ones.
Recall that, by Theorem~\ref{thm:pda decid homo},
if $T(k)$ is the time complexity of the induced substructure problem of a homogeneous structure $\atoms$,
then reachability of \fodef \PDS over $\atoms$ is decidable in $T$-relative pseudo-polynomial time.
When the former problem is in \ptime, reachability can be solved in \exptime
by the following corollary of Theorem~\ref{thm:pda decid homo}.
\begin{corollary} \label{cor:pda exptime homo}
	Let $\atoms$ be a homogeneous relational structure with a \ptime induced substructure problem.
	For \fodef \PDS $\pds$ over $\atoms$ and an \fodef \NFA $\nfa$ recognizing a regular set of configurations $L_{\pds}(\nfa)$,
	one can construct in \exptime
	an \fodef \NFA $\nfb$ recognizing $L_{\pds}(\nfb) = \invreach{\pds}{L_{\pds}(\nfa)}$.
	In particular, the \fodef \PDS reachability problem over $\atoms$ is in \exptime.
\end{corollary}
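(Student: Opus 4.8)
The plan is to obtain Corollary~\ref{cor:pda exptime homo} as a direct specialization of Theorem~\ref{thm:pda decid homo}, by instantiating the complexity parameter $T(k)$ of the induced substructure problem with a polynomial. The key observation is that the $T$-relative pseudo-polynomial time bound from Theorem~\ref{thm:pda decid homo}, namely $2^{\poly(n)} \cdot \poly(m) \cdot T(\poly(n))$, collapses into a genuine exponential-time bound as soon as $T$ itself is polynomial.

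First I would recall the exact shape of the bound. By Theorem~\ref{thm:pda decid homo}, the saturated \fodef \NFA $\nfb$ recognizing $\invreach{\pds}{L_{\pds}(\nfa)}$ is computable in time $2^{\poly(n)} \cdot \poly(m) \cdot T(\poly(n))$, where $n$ is the width and $m$ the size of the input automata. By hypothesis, the induced substructure problem for $\atoms$ is in \ptime, so there is a polynomial $p$ with $T(k) \leq p(k)$ for all $k$. Substituting this bound, the third factor becomes $T(\poly(n)) \leq p(\poly(n))$, which is itself polynomial in $n$, and hence bounded by $2^{\poly(n)}$.

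Next I would simply combine the three factors. The product $2^{\poly(n)} \cdot \poly(m) \cdot 2^{\poly(n)}$ equals $2^{\poly(n)} \cdot \poly(m)$, which is bounded by $2^{\poly(n+m)}$ and thus exponential in the total size of the input. This establishes that the construction of $\nfb$ runs in \exptime. Since both $n$ and $m$ are at most the total input size, no separate parameter tracking is needed for the statement, as we only claim \exptime and not the finer fixed-parameter bound.

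Finally, the reachability claim follows immediately: given $p, q \in P$ and $\bot \in \Gamma$, one constructs an \fodef \NFA $\nfa$ recognizing the regular target set $\set q \times \Gamma^*$ (of constant width and size), runs the above construction to obtain $\nfb$ recognizing $\invreach{\pds}{\set q \times \Gamma^*}$ in \exptime, and then tests membership of $(p, \bot)$, which reduces to a first-order satisfiability check over $\atoms$ and is subsumed by the induced substructure oracle in \ptime. I do not anticipate any real obstacle here, as this is purely a matter of substituting a polynomial for $T$ and simplifying; the entire mathematical content already resides in Theorem~\ref{thm:pda decid homo} and Proposition~\ref{claim:nr of orbits}. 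The only point requiring a line of care is confirming that a polynomial composed with the polynomial $\poly(n)$ remains polynomial, so that the $T(\poly(n))$ factor is absorbed into the overall exponential rather than introducing a second level of exponentiation.
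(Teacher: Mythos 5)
Your proposal is correct and takes essentially the same route as the paper, which treats the corollary as an immediate instantiation of Theorem~\ref{thm:pda decid homo}: when $T$ is polynomial, the $T$-relative pseudo-polynomial bound $2^{\poly(n)} \cdot \poly(m) \cdot T(\poly(n))$ collapses to $2^{\poly(n+m)}$, i.e.\ \exptime. The paper offers no further argument beyond this substitution, so your write-up (including the reduction of the reachability query to running the construction on an \NFA for $\set{q} \times \Gamma^*$ and testing membership) is, if anything, more detailed than the original.
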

All the concrete examples that we provide in the sequel, and all infinitely many examples that can be obtained by applying the wreath product,
have a \ptime induced substructure problem, and thus reachability is in \exptime.

\subparagraph{Equality.}

Equality atoms $(\D, =)$ consist of a countably-infinite set $\D$ together with the equality relation.
Automorphisms are permutations of $\D$.
Homogeneity follows from the fact that any finite partial bijection $\D \to \D$ 
can be extended to a permutation of the whole set $\D$.
This is arguably the simplest homogeneous structure.
The induced substructure problem is in \ptime,
since it amounts to check whether the interpretation of $=$ in a given finite structure is the equality relation.
By Corollary~\ref{cor:pda exptime homo}, reachability for \fodef \PDS over equality atoms is in \exptime.
This subsumes the result of~\cite{MRT14}, which considers a special case of our model where, among other restrictions, 
the input and stack alphabets are 1-dimensional, and the transition relation is quantifier-free definable (instead of \fodef).
%
Additionally, \cite{MRT14} shows that the problem is \exptime-hard for equality atoms.

All the examples below generalize equality atoms by adding more relations to the vocabulary.
We omit equality, which is assumed to always be in the vocabulary.

\subparagraph{Equivalence.}

\emph{Equivalence atoms} $(\D, R)$ consist of a countably-infinite set $\D$
and an infinite-index equivalence relation $R$ over $\D$
s.t. each one of the infinitely-many equivalence classes is itself an infinite subset of $\D$.
An automorphism of equivalence atoms is a bijection $f$ of $\D$ which respects $R$,
in the sense that, for every $x, y \in \D$, $(x, y) \in R$ if, and only if, $(f(x), f(y)) \in R$.
Equivalence atoms are homogeneous.
(We will see later that equivalence atoms are isomorphic with the wreath product of equality atoms with itself.)
%
This can model hierarchically nested data,
where one can check whether two elements belong to the same equivalence class,
and, if so, whether they actually are the same element.
Higher nested equivalence atoms can be obtained by iterating this process:
$0$-nested equivalence atoms are just equality atoms; and 
for any $k \geq 0$, $(k+1)$-nested equivalence atoms can be seen as the disjoint union
of infinitely many copies of $k$-nested equivalence atoms, 
with one additional equivalence relation that relates a pair of elements iff they belong to the same copy.

\subparagraph{Total, betweenness, and cyclic order.}

\emph{Total order atoms} $(\Q, \leq)$ can be presented as the rational numbers $\Q$ together with the natural total order $\leq$.
Automorphisms are monotonic bijections of rational numbers.
Homogeneity follows from the fact that $\leq$ is dense:
A monotonic bijection $f : X \to Y$ over a finite domain $X$ extends to an automorphism of $\Q$.
The induced substructure problem is in \ptime,
since it amounts to check whether the interpretation of $\leq$ in a given finite structure is a total order.
This can be used to model qualitative time,
where events are totally ordered, but no information is available on the distance between them.
Another instance is given by data-centric applications \cite{DeutschHullPatriziVianu:ICDT:2009}.

\emph{Betweenness order atoms} $(\Q, B)$ use the betweenness relation $B$,
which is obtained by considering the order $\leq$ up to reversal:
$B(x, y, z)$ holds when $x$ lies between $y$ and $z$, i.e., either $y < x < z$ or $z < x < y$.
This can be used to model time where one is not interested on the order between the events themselves,
but rather on whether an event happened between two other events.
\emph{Cyclic order atoms} $(\Q, K)$ use the ternary cyclic ordering $K$
obtained by bending the total order into a circle.
Formally, $K(x, y, z)$ if either $x < y < z$, or $z < x < y$, or $y < z < x$.
This can model a notion of qualitative cyclic time, where events cyclically repeat,
but no precise timing information is available.
For both betweenness and cyclic order atoms, the induced substructure problem is in \ptime.

\subparagraph{Universal partial order and preorder.}

Every relational homogeneous structure is obtained as the \emph{Fraiss\'e limit} of the set of all its finite induced
substructures~\cite{F53}.
(We do not formally define here the notion of Fraiss\'e limit, which is a central tool for constructing homogeneous structures; cf. \cite{survey}.)
For instance, total order atoms are the Fraiss\'e limit of all finite total orders.
\emph{Partial order atoms} are obtained as the Fraiss\'e limit of the set of all finite partial orders.
The induced substructure problem amounts to determine whether the interpretation of $\leq$ in a given finite structure is a partial order,
which can clearly be done in \ptime.
This can be used to model the ordering of events in distributed systems.
Along the same lines one obtains \emph{preorder atoms}.

\subparagraph{Universal tree order.}

A \emph{tree order} (or semilinear order) is a partially ordered structure $(A, \leq)$
s.t. a) every two elements have an common upper bound,
and b) for every element, its upward closure is totally ordered.
Tree order atoms $(T, \leq)$ are obtained as the Fraiss\'e limit of the set of all finite tree orders.
Intuitively, tree order atoms consists of a countably-infinite tree order where each maximal path is isomorphic to total order atoms.
Tree order atoms as presented here are not homogeneous.
Intuitively, this happens because isomorphic substructures have least upper bounds outside the structures themselves,
and they might relate to those in an incomparable way.
%
This can be amended by introducing be the following ternary relation: 
$R(x, y, z)$ holds when the lub of $x$ and $y$ is incomparable with $z$.
Then, $(T, \leq, R)$ is homogeneous,
and it can be obtained as the Fraiss\'e limit of the set of all extended finite tree orders
$(A, \leq, R)$.
The induced substructure problem is in \ptime for $(T, \leq, R)$.

\subparagraph{Universal graph and tournament.}

\emph{Universal graph atoms} are obtained as the Fraiss\'e limit of the set of all finite graphs.
This is also known as \emph{Rado's graph} or \emph{the random graph}.
The induced substructure problem is trivial since the universal graph contains an isomorphic copy of every finite graph.
Similarly, \emph{universal tournament atoms} are the Fraiss\'e limit of the set of all finite tournaments,
where a tournament is an irreflexive graph $T = (V, E)$ s.t.,
for every two nodes $x, y \in V$, either $(x, y) \in E$, or $(y, x) \in E$.
Given a graph, it is clearly checkable in \ptime whether it is actually a tournament,
thus the induced substructure problem is in \ptime also in this case.


\subparagraph{Wreath products.}

We conclude this section by giving a construction which allows to compose homogeneous structures in order to produce new ones.
Given two relational structures $\A = (A, R_1, \dots, R_m)$ and $\B = (B, S_1, \dots, S_n)$,
their \emph{wreath product} is the relational structure $\wreath \A \B = (A \times B, R'_1, \dots, R'_m, S'_1, \dots, S'_n)$,
where $((a_1, b_1), \dots, (a_k, b_k)) \in R'_i$ if $(a_1, \dots, a_k) \in R_i$,
and $((a_1, b_1), \dots, (a_k, b_k)) \in S'_j$ if $a_1 = \cdots = a_k$ and $(b_1, \dots, b_k) \in S_j$.
Intuitively, $\wreath \A \B$ is obtained by replacing each element in $\A$ with a disjoint copy of $\B$.
It can be checked that, if the two structures $\A$ and $\B$ are homogeneous,
then the same holds for their wreath product $\wreath \A \B$.   
The induced substructure problem for $\wreath \A \B$ reduces in \ptime to the same problem for $\A$ and $\B$:
$\set{(a_1, b_1), \dots, (a_k, b_k)}$ is an induced substructure of $\wreath \A \B$ if, and only if,
$\set{a_1, \dots, a_k}$ is an induced substructure of $\A$,
and for every $i$, $\setof{b_j}{a_j = a_i}$  is an induced substructure of $\B$.
%
Therefore, if both $\A$ and $\B$ have a \ptime induced substructure problem,
then the same holds for $\wreath \A \B$, and Corollary~\ref{cor:pda exptime homo} applies.

As an application of the wreath product, take $\A_0 = (\D, =)$ to be equality atoms, and,
for each $k \geq 0$, let $\A_{k+1} = \wreath {\A_0} {\A_k}$.
Then, $\A_1$ is just the equivalence atoms presented before,
and, more generally, $\A_k = (\D, R_1, \dots, R_k)$ is \emph{$k$-nested equivalence atoms},
which can be used to model data with nested equivalence relations.
%
For each of those infinitely many examples, the reachability problem for \fodef \PDS is in \exptime.
%



\section{Conclusions}

\label{sec:conclusions}

We have studied the reachability problem for a model of PDS
with countably-infinite \fodef states, stack alphabet, and transitions relation.
We advocate a Ockham's razor research strategy that refrains from inventing seemingly new notions.
Instead, we have taken the standard definition of PDS
and re-interpreted it in the richer framework of \fodef sets instead of ordinary finite sets.
This covers the well-known model of pushdown register automata \cite{ChengKaminski:CFL:AI98,MRT14}
as one instantiation of the general paradigm,
and we have shown that the optimal \exptime complexity for the reachability problem for this model can be recovered in the more general framework.
This same paradigm can of course be applied to a variety of different models,
like timed \PDS~\cite{AbdullaAtigStenman:DensePDA:12},
data/timed extensions of Petri nets~\cite{AbdullaN01,LazicNewcombOuaknineRoscoeWorrell:2007}, 
lossy channel systems~\cite{AbdullaAtigCederberg:TLCS:FSTTCS12},
1-clock/1-register alternating automata~\cite{LasotaWalukiewicz:ATA:ACM08,OuaknineWorrel:Inclusion:LICS04,DemriLazic:FreezeLTL:ACM09},
rewriting systems~\cite{BouajjaniHabermehlJurskiSighireanu:Rewriting:FCT:2007}, etc.
Therefore, the present paper can be seen as a proof of concept of the new research strategy.
For example, one could consider \emph{\fodef pushdown automata} (PDA) and \emph{\fodef context-free grammars} (CFG)
as acceptors of languages over infinite alphabets.
The definition of \fodef PDA is analogous to PDS,
except that the transition relation is an \fodef subset of
$Q \times \Gamma^* \times A_\eps \times Q \times \Gamma^*$,
where $A_\eps = A \cup \{\eps\}$ is an \fodef alphabet extended with the empty word.
Similarly, \fodef CFG can be defined as \emph{stateless} \fodef PDA
where every transition pops exactly one symbol from the stack.
It is easy to prove that \fodef PDA languages coincide with \fodef context-free languages for oligomorphic atoms \cite{BKL11full},
and that the latter are closed under union, concatenation, Kleene star, homomorphism, inverse homomorphism, intersection with \fodef regular languages,
and that collapsing each orbit to a different symbol yields a classical context-free language. 

\bibliographystyle{abbrv}
\bibliography{bib}


\end{document}